\theoremstyle{plain}
\newtheorem{theorem}{Theorem}
\newtheorem{lemma}{Lemma}
\theoremstyle{definition}
\newtheorem{definition}{Definition}
\newcommand{\appfac}[0]{\ensuremath{2.097}\xspace}
\newcommand{\appfackpath}[0]{\ensuremath{2.41}\xspace}
\newcommand{\appfacpcsymbol}[0]{{\hat{\alpha}}}
\newcommand{\solbeforepc}[0]{{T^{\prime\prime}}}
\newcommand{\pickupforest}[0]{{F_P}}
\newcommand{\sigzero}[0]{{\sigma_0}}
\newcommand{\sigzeroprime}[0]{{\sigma'_0}}
\newcommand{\mystar}{*}
\newcommand{\OTSP}{OTSP\xspace}
\newcommand{\PCOTSP}{PCOTSP\xspace}
\newcommand{\kTSP}{Multi-Path-TSP\xspace}
\newcommand{\kpath}{Multi-Path\xspace}
\newcommand{\PCkTSP}{PC-Multi-Path-TSP\xspace}
\newcommand{\PCTSP}{PCTSP\xspace}
\newcommand{\expectation}[1]{{\mathbb E \bigl[ {#1} \bigr]}}
\renewcommand{\Pr}{\text{Pr}}
\title{Approximating Prize-Collecting Variants of TSP}
\author{Morteza Alimi \orcidlink{0000-0003-3732-2903}}
\author{Tobias M\"{o}mke \orcidlink{0000-0002-2509-6972} \thanks{Partially supported by DFG Grant 439522729 (Heisenberg-Grant).}}
\author{Michael Ruderer \orcidlink{0009-0009-1968-4821}}
\affil{University of Augsburg}
\begin{document}
\maketitle   
\begin{abstract}
    We present an approximation algorithm for the 
    \textit{Prize-col\-lect\-ing Ordered Traveling Salesman Problem} (\PCOTSP),
    which simultaneously generalizes the Prize-collecting TSP and the Ordered TSP.
    The Prize-collecting TSP is well-studied and has a long history, 
    with the current best approximation factor slightly below $1.6$, shown by Blauth, Klein and Nägele [IPCO 2024].
    The best approximation ratio for Ordered TSP is $\frac{3}{2}+\frac{1}{e}$, 
    presented by  B\"{o}hm, Friggstad, M\"{o}mke, Spoerhase [SODA 2025] and Armbruster, Mnich, N\"{a}gele [Approx 2024]. 
    The former also present a factor 2.2131 approximation algorithm for \kTSP.

    By carefully tuning the techniques of the latest results on the aforementioned problems 
    and leveraging the unique properties of our problem, 
    we present a \appfac-approximation algorithm for \PCOTSP. 
    A key idea in our result is to first sample a set of trees, and then
    probabilistically pick up some vertices,
    while using the pruning ideas of Blauth, Klein, N\"{a}gele [IPCO 2024] on other vertices
    to get cheaper parity correction; 
    the sampling probability and the penalty paid by the LP playing a crucial part in both cases.
    A straightforward adaptation of the aforementioned pruning ideas would only give minuscule improvements over standard
    parity correction methods.
    Instead, we use
    the specific characteristics of our problem together with properties gained from running a simple combinatorial algorithm
    to bring the approximation factor below 2.1.

    Our techniques extend to Prize-collecting \kpath TSP, building on results from
    B\"{o}hm, Friggstad, M\"{o}mke, Spoerhase [SODA 2025], leading to a \appfackpath-approximation.
\end{abstract}

\section{Introduction}

The metric traveling salesman problem (TSP), which asks for a shortest closed tour in a 
metric space $(V,c), \ c: V\times V \to \mathbb{Q}_+$ on an $n$ element vertex set $V$ 
visiting each vertex exactly once%
\footnote{Note that the problem can alternatively be defined as having a weighted graph as input, and seeking to find 
a shortest closed walk (or shortest Eulerian multi-subgraph) that visits each vertex at least once.}
is one of the most well-studied problems 
in combinatorial optimization in its various incarnations. 
Christofides \cite{christofides-1976} and Serdjukov \cite{serdjukov-1978}  
gave a simple  $\frac{3}{2}$ approximation algorithm
for  symmetric (undirected) TSP;  
an approximation factor slightly below $\frac{3}{2}$ was provided by Karlin, Klein and Oveis Gharan~\cite{karlin-klein-oveis-2021}.

Also the path versions of TSP have been subject to intense study. This line of study led to a surprising outcome:
Traub, Vygen, and Zenklusen \cite{traub-vygen-zenklusen-2020} give a reduction from path-TSP to TSP 
which only loses $\epsilon$ in the approximation factor.  
In their book, Traub and Vygen \cite{traub-vygen-2024} 
simplify the reduction and use \kTSP (see below) as a building block.
The book gives a comprehensive overview of the aforementioned results and more, including 
asymmetric (directed) variants of TSP.

A more general, and more practical version of the problem can be defined 
by allowing the tour to omit some cities by paying some additional penalty. 
This \textit{prize collecting} paradigm has been intensely studied for various combinatorial optimization problems; 
see, e.g., \cite{ahmadi-gholami-hajiaghayi-2024a,ahmadi-gholami-hajiaghayi-2024b} for some recent results 
regarding Prize Collecting Steiner Forest and Steiner Tree Problems. 
As regards the \text{Prize-collecting Traveling Salesman Problem (\PCTSP)}, 
Bienstock et.\ al.\ \cite{bienstock-goemans-simchilevi-1993} give a $2.5$-approximation for this problem. 
Goemans and Williamson \cite{goemans-williamson-1995} give a factor 2 approximation. 
The first group to break the 2 barrier were Bateni et.\ al.\ \cite{archer-bateni-hajiaghayi-2011}, 
who achieved a factor of $1.979$. 
Goemans \cite{goemans-2009} observed that by carefully combining the two aforementioned algorithms, one can achieve 
an approximation factor of $1.91$. 
Blauth and N\"{a}gele \cite{blauth-naegele-2023} gave a factor $1.774$-approximation. 
Blauth, Klein, and N\"{a}gele~\cite{blauth-klein-naegele-2024} achieved a factor of $1.599$.

Another important generalization of TSP is attained by enforcing some ordering on a subset of vertices. 
Formally, the metric Ordered Traveling Salesman Problem (\OTSP), we are given 
a metric 
graph $G=(V,E),c\colon E\to \mathbb{R}_+$ together with $k$ terminals $O=\{o_1, \ldots, o_k\}$, 
and the objective is to find a shortest tour that visits the terminals in order. 
Any $\alpha$-approximation for TSP can be utilized to give an $\alpha+1$-approximation algorithm for \OTSP
by finding an $\alpha$-approximate tour on $V\setminus O$ and combining it with the cycle on $O$ \cite{boeckenhauer-hromkovic-kneis-2006}.
Furthermore, there is a combinatorial $(2.5-2/k)$-approximation algorithm~\cite{boeckenhauer-moemke-steinova-2013}.
A substantial improvement in the approximation factor to $\frac{3}{2} + \frac{1}{e}$ was achieved by
\cite{boehm-friggstad-moemke-2025} and \cite{armbruster-mnich-nagele-2024}. 

In \kpath Traveling Salesman Problem (\kTSP), in addition to a weighted graph we are given a list of $2k$ terminals $\mathcal{T} = \{(s_1,t_1),\ldots,(s_k,t_k)\}$.
The objective is to find $k$ paths $P_i, \ 1\le i\le k$ 
of minimum total length, partitioning the vertex set of the graph (see also Section 16.4 of ~\cite{traub-vygen-2024}). 
Böhm, Friggstad, Mömke and Spoerhase~\cite{boehm-friggstad-moemke-2025} give a factor $2.2131$ approximation algorithm for the problem.

\subsection{Our Contribution and Overview of Techniques}
We study \OTSP and \kTSP in the prize-collecting setting, which can also be viewed as generalizations of 
\PCTSP.

\begin{definition}
    In the \textit{Prize-collecting Ordered Traveling Salesman Problem (PC\-OTSP)}  
    we are given a metric weighted complete graph with penalties for vertices  
    $G=(V,E), \ c\colon E\to \mathbb{R}_+, \pi:V\to \mathbb{R}_+$, 
    together with $k$ terminals $O=\{o_1, \ldots, o_k\}$,
    and the objective is to find a tour $C$, minimizing $c(C) + \sum_{v\not\in V(C)} \pi(v)$. 
\end{definition}
\begin{definition}
    In the \textit{Prize-collecting \kpath Traveling Salesman Problem (\PCkTSP)}, 
    we are given a metric weighted complete graph with vertex penalties  
    $G=(V,E), \ c\colon E\to \mathbb{R}_+, \pi\colon V\to \mathbb{R}_+$, 
    together with a set of $k$ terminal pairs $\mathcal{T}=\{(s_1,t_1), \ldots, (s_k,t_k)\}$,
    and the objective is to find $k$ paths $P_i, 1\le i\le k$, such that $P_i$ is a path with endpoints $s_i, t_i$,
    minimizing $\sum_{i=1}^k c(P_i) + \sum_{v\not\in (\cup V(P_i))} \pi(v)$. 
\end{definition}

\begin{theorem}\label{thm:opctsp}
    There is a \appfac-approximation algorithm for \PCOTSP. 
\end{theorem}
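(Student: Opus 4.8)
The plan is to combine a prize-collecting Held--Karp LP with a randomized rounding in which the fixed cyclic order walk $D=o_1o_2\cdots o_ko_1$ on the terminals carries most of the connectivity and part of the parity-correction burden, while hedging the whole thing against a simple combinatorial algorithm. First I would set up the relaxation with edge variables $x_e\ge 0$ and visiting variables $y_v\in[0,1]$, $y_{o_i}=1$, degree constraints $x(\delta(v))=2y_v$, cut constraints $x(\delta(S))\ge 2y_v$ for $v\in S$, and objective $c(x)+\sum_v\pi_v(1-y_v)$; since the optimal ordered tour is feasible, $\mathrm{LP}\le\mathrm{OPT}$. Let $(x^\mystar,y^\mystar)$ be an optimal solution. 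In parallel I run the combinatorial $(2.5-2/k)$-type algorithm of~\cite{boeckenhauer-moemke-steinova-2013}, obtaining a tour $T_{\mathrm{comb}}$, and the final algorithm outputs whichever of $T_{\mathrm{comb}}$ and the rounded solution is cheaper. The key cheap fact is $c(D)\le\mathrm{OPT}$ (shortcut the optimal tour onto the terminals, preserving order); the combinatorial solution then covers the parameter regime in which the LP-rounding guarantee is weakest, so the worst case reduces to a small optimization over one or two parameters.

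For the rounded solution, the structural point is that since every edge of $D$ will be included, the sampled part only needs to be connected \emph{after} adding $D$, and the terminal order will be recovered by a standard order-preserving Eulerian-walk-and-shortcut argument (as used for multi-path TSP). Hence it suffices to sample a forest --- a set of trees --- each of whose components meets $\terminalset$. I would fix a threshold/sampling probability $\sampleprob$, restrict attention to $V_\sampleprob=\{v:y^\mystar_v\ge\sampleprob\}\cup\terminalset$, scale $x^\mystar$ by $1/\sampleprob$ there, decompose the result into a convex combination of such terminal-anchored forests, and sample one, $F$; a vertex lands in $F$ with probability governed by $y^\mystar_v$ versus $\sampleprob$, and vertices outside $V_\sampleprob$ are paid for by their penalty, the gap to the LP's fractional penalty $\pi_v(1-y^\mystar_v)$ being precisely what $\sampleprob$ must keep small. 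A ``pick-up'' phase then probabilistically attaches further vertices to $F\cup D$ through a pickup forest $\pickupforest$, both reducing the penalty actually paid and creating slack for parity, yielding a connected spanning multigraph $\solbeforepc$; here too the per-vertex trade-off pits the cost of attaching $v$ against $\pi_v$, with $\sampleprob$ and the LP penalty again fixing the threshold.

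The heart of the proof is parity correction on $\solbeforepc$. I would split its odd-degree vertices by a ratio test comparing each vertex's penalty with the cost the LP is willing to spend reaching it. The ``cheap'' odd vertices are corrected in the usual way, but cheaply: an alternating subset of the $k$ edges of $D$ is already a $T$-join for the odd terminals, so a random choice among $D$'s alternating classes together with a fractional $T$-join extracted from $\tfrac12 x^\mystar$ yields an expected correction cost equal to a controlled fraction of $c(D)+c(x^\mystar)$. The ``expensive'' odd vertices are instead handled by the pruning idea of~\cite{blauth-klein-naegele-2024}: delete them from $\solbeforepc$ and pay their penalty rather than buy edges to repair their parity --- affordable exactly because the LP has already paid most of $\pi_v$ on such vertices, and safe because $D$ keeps all terminals connected. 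Finally I take an Eulerian walk of the resulting connected Eulerian multigraph and shortcut it into a tour respecting the order of $D$; then $\expectation{\mathrm{ALG}}$ is a weighted combination of $c(x^\mystar)$, $c(D)$ and the LP penalty whose coefficients depend on $\sampleprob$ and the ratio threshold, and combining this bound with the regime covered by $T_{\mathrm{comb}}$ and optimizing the parameters gives cost at most $\appfac\cdot\mathrm{OPT}$.

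The main obstacle is the one the abstract flags: a black-box application of the pruning idea improves almost nothing, so the real work is to make forest sampling at a tuned $\sampleprob$, the probabilistic pick-up, and the split parity correction (with the cheap side leaning on $D$'s alternating $T$-joins) all draw on a single shared budget, so that every term is simultaneously small rather than each being good in a different regime. Along the way one must check that pruning an expensive odd vertex neither re-breaks parity elsewhere nor disconnects $\solbeforepc$, that the terminal order survives both the pruning surgery and the shortcutting, and that the combinatorial-versus-LP case analysis is tight enough to drive the constant below $2.1$; the final worst-case optimization over $\sampleprob$, the ratio threshold and the sampling distribution is then routine but fiddly.
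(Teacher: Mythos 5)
Your overall philosophy matches the paper's (an LP, a sampled structure, probabilistic pick-up, pruning-flavoured parity correction, and a hedge against a simple combinatorial algorithm), but three of your concrete choices break the argument. First, the hedge: the $(2.5-2/k)$ algorithm of B\"ockenhauer et al.\ is for ordered TSP \emph{without} penalties --- it visits every vertex, so its cost cannot be bounded against the \PCOTSP optimum at all (one distant vertex of negligible penalty ruins it). The paper's simple algorithm instead combines the direct terminal cycle $\hat C$ with a $\appfacpcsymbol\approx 1.599$ approximation for \PCTSP in which terminals get infinite penalty, and the complementary assumption $c(\hat C)\ge(\alpha-\appfacpcsymbol)\,\mathrm{opt}$ is exactly what enters the final constant. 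Second, your rounding pays for connectivity twice: you buy all of $D$ (up to $\mathrm{OPT}$) \emph{and} a forest sampled from $x^\ast$ scaled by $1/\sampleprob$ (expected cost about $c(x^\ast)/\sampleprob$), so connectivity alone is already near $2\cdot\mathrm{OPT}$ before parity correction and penalties. The paper avoids both losses by formulating the LP as $k$ per-segment stroll relaxations and sampling one tree per segment via the Bang-Jensen--Frank--Jackson/Post--Swamy decomposition: the union has expected cost at most $c(x^\ast)$, it already contains an ordered closed walk $C$ through the terminals (no separate $D$ is bought), and each vertex is missed with probability at most $e^{-y^\ast_v}$; this last bound is what dictates the pick-up/pruning thresholds and distributions and ultimately the value \appfac.

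The parity-correction step has a genuine gap as well. The vector $\tfrac12 x^\ast$ is \emph{not} in the dominant of the odd-join polytope: for a cut $S$ containing only partially covered vertices the LP guarantees merely $x^\ast(\delta(S))\ge 2y^\ast_v<2$. Alternating subsets of $D$ fix odd terminals but do nothing for these low-$y$ cuts, which are the real bottleneck. Your fallback --- delete ``expensive'' odd vertices and pay their penalty --- is not the Blauth--Klein--N\"agele pruning and does not work: an odd vertex can have $y^\ast_v$ close to $1$, so the LP pays only $(1-y^\ast_v)\pi_v$ and you cannot charge $\pi_v$; moreover deleting a vertex flips its neighbours' parities and can disconnect the graph. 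In the paper, pruning removes low-$y$ subtrees of the sampled trees \emph{before} the join, with a random threshold $\gamma$ chosen so that the layered weights on surviving tree edges, together with $\beta x$ and a pickup term, lie in the dominant of the $\text{odd}(\solbeforepc)$-join polytope, while their expected cost is paid by the spare penalty ratio of vertices with $y^\ast_v<\sigma_0$; cycle edges need no weight because $|\delta_C(S)|$ is always even, and that is precisely where the lower bound on $c(\hat C)$ is cashed in. Without these elements the final optimization over $\sampleprob$ and the ratio threshold has no quantitative basis, so the claim of reaching \appfac is unsupported.
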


\begin{theorem}\label{thm:kpathtsp}
    There is a \appfackpath-approximation algorithm for \PCkTSP. 
\end{theorem}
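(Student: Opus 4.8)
The plan is to mirror the proof of \Cref{thm:opctsp}, replacing the \OTSP-specific ingredients by the \kTSP-specific ones of B\"{o}hm, Friggstad, M\"{o}mke and Spoerhase~\cite{boehm-friggstad-moemke-2025}. First I would write a Held--Karp-style LP relaxation for \PCkTSP: edge variables $x_e$ and vertex variables $y_v \in [0,1]$, objective $\sum_e c_e x_e + \sum_v (1-y_v)\pi(v)$, with cut constraints demanding (fractionally) a spanning structure on $\{v : y_v = 1\}$ together with all $2k$ terminals that decomposes into $k$ components, one per terminal pair --- cuts separating some $s_i$ from its mate carry at least $y$-scaled value $1$, all other cuts at least $2$. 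Solving this LP and decomposing it yields a convex combination of integral ``connectors'' (forests whose components each contain exactly one terminal pair) plus a fractional penalty budget; these are the objects the rounding consumes.

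Next comes the rounding, which reuses the two-regime idea from the abstract. I would sample a connector $\pickupforest$ from the LP support and, with the sampling probability $\sampleprob$ tuned against the LP-penalty parameter $\appfacpcsymbol$, split the non-terminal vertices into those that are \emph{picked up} --- probabilistically attached, charging their connection cost to the edges of $\pickupforest$ --- and those handled by the pruning of Blauth, Klein and N\"{a}gele~\cite{blauth-klein-naegele-2024}, which are either dropped for their penalty or parity-corrected cheaply inside the $T$-join. On the backbone of picked-up vertices and terminals I would invoke the $2.2131$-approximation for \kTSP~\cite{boehm-friggstad-moemke-2025} --- more precisely, borrow its $T$-join/parity-correction analysis --- to obtain an intermediate solution $\solbeforepc$, and then dispose of the pruned vertices exactly as in the \PCOTSP argument: shortcut past the ones we do not visit, charge the rest to penalties.

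The analysis then writes the expected cost as a sum of three terms --- the sampled connector, the parity correction, and the total penalty paid --- each bounded as an explicit function of $\sampleprob$ and $\appfacpcsymbol$ against the LP optimum, and optimizes the two parameters. Since the \kTSP base guarantee ($2.2131$) is weaker than the $\tfrac32 + \tfrac1e$ available for \OTSP, and since per-pair connectivity leaves less slack than a single ordered tour, the optimum of this trade-off lands at \appfackpath instead of \appfac; the extra structure one gets in the ordered case from running a simple combinatorial $(2.5 - 2/k)$-type routine enters here only in a weakened form, which is why the gain over plain parity correction, though real, is smaller.

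The hard part will be reconciling the Blauth--Klein--N\"{a}gele pruning with the comparatively rigid combinatorics of \kTSP. In the ordered setting the terminal cycle is fixed and cheap; here a pruning step must still leave each $s_i$ connected to $t_i$ within its own component and must preserve the degree-parity pattern that lets the Eulerian structure fall apart into exactly $k$ paths --- terminals cannot be pruned, and a careless shortcut can merge or split components. I expect most of the work to be a case analysis certifying that pruning can always be performed subject to these constraints while the charging argument keeps all three cost terms simultaneously under control; this is exactly where ``a straightforward adaptation would only give minuscule improvements,'' so the proof must open up the $T$-join produced by~\cite{boehm-friggstad-moemke-2025} rather than use it as a black box.
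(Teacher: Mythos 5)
Your high-level plan (an LP with per-pair stroll components, tree sampling, and a probabilistic pick-up threshold tuned against the LP penalty) matches the paper up to the pick-up step, but the core of your proposal --- importing the pruning of \cite{blauth-klein-naegele-2024} and ``opening up'' the $T$-join/parity-correction analysis of the $2.2131$-approximation for \kTSP from \cite{boehm-friggstad-moemke-2025} --- is not what the paper does, and the paper explicitly argues that this route breaks: in the prize-collecting setting the mandatory picking-up step incurs additional cost that exceeds the gains of that refined analysis as soon as more than one tree must be sampled. You yourself flag the reconciliation of pruning with per-pair connectivity as ``the hard part'' and defer it to an unspecified case analysis, so the proposal as written does not establish the \appfackpath bound, and the claim that the parameter trade-off ``lands at \appfackpath'' is asserted rather than derived.

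The paper's actual proof avoids pruning and $T$-joins entirely. It runs two simple algorithms and returns each with probability $\tfrac12$. Algorithm $A$ samples one tree per terminal pair via \cref{lem:bang-jensen}, picks up unsampled vertices with $y_v \geq \sigma$ for a random threshold $\sigma \in [\sigzeroprime, 1]$ (the same distribution as in the \PCOTSP pick-up, with $\sigzeroprime$ in place of $\sigma_0$), and then doubles every edge not on an $s_i$-$t_i$ tree path --- this fixes parity for free and yields cost at most $(2 + 2e^{-\sigzeroprime} - \eta)\,c(x^\ast)$ with $\eta = \sum_i c(s_i,t_i)/c(x^\ast)$, and an expected penalty factor of $e^{-\sigzeroprime}/(1-\sigzeroprime)$. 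Algorithm $B$ contracts all $2k$ terminals into a single vertex of infinite penalty, samples a single tree from the resulting \PCTSP relaxation, doubles it, and adds the direct edges $\{s_i,t_i\}$, giving cost at most $(2+\eta)\,c(x^\ast)$ and penalty factor $1$. Averaging the two cancels $\eta$ and leaves $\max\bigl\{2 + e^{-\sigzeroprime},\ \tfrac12\bigl(e^{-\sigzeroprime}/(1-\sigzeroprime) + 1\bigr)\bigr\}$, minimized near $\sigzeroprime \approx 0.8928$ at a value just below \appfackpath. The missing idea in your proposal is precisely this two-algorithm combination (in particular Algorithm $B$'s terminal contraction, which handles the small-$\eta$ regime), and the step that would fail is the attempted transfer of the pruning/$T$-join machinery to the prize-collecting multi-path setting.
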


In the remainder of this section we provide an overview of our algorithms and techniques for \PCOTSP and \PCkTSP. 
The details of the algorithms and the proofs of correctness are covered in the following sections

In \cref{sec:pcotsp}, we specify the LP relaxation (OLP) of \PCOTSP.
Our algorithm first solves the LP, and then it samples a tree $T_i$ for each of the $k$ parts of the 
LP solution, in the manner of Lemma 2 in \cite{blauth-klein-naegele-2024}, 
and Theorem 4 in \cite{boehm-friggstad-moemke-2025},  
which are both based on a result of Bang Jensen, Frank and Jackson~\cite{bangjensen-frank-jackson-1995} 
and Post and Swamy~\cite{post-swamy-2015}\footnote{Post and Swamy showed that the Bang Jensen, Frank, Jackson decomposition can be computed in polynomial time.}
(see \cref{lem:bang-jensen}). 
The expected cost of each tree is at most the corresponding part of the LP solution, and its expected coverage
is at least that of the corresponding $y$ values.
Now a key idea in our algorithm is to \textit{even out} the penalty ratios paid by the algorithm, 
by either reducing the penalties paid by some vertices through connecting them up with the $T_i$'s 
(which inevitably increases the connectivity cost), 
or utilizing the spare penalty ratios of other vertices to bring down the cost of parity correction. 
More precisely, we use the \textit{pruning} idea from \cite{blauth-klein-naegele-2024} to probabilistically
prune away portions of the resultant structure with low connectivity. 
This allows us to assign weights to the edges of the tree, which, when combined with a suitable multiple of 
the LP solution, gives a point in the $Q$-join polytope, where $Q$ is the set of odd degree vertices of the current structure. 
This allows for cheaper parity correction, because the tree edges with higher assigned weights, 
which are associated with lower value cuts, are pruned with higher probability, 
and thus the expected cost incurred by them remains low. 

The prominent difference between our problem and the standard \PCTSP, which necessitates finding new ideas for pruning, 
is the sampling probability for vertices.
In \PCTSP, the sampling probability for each vertex $v$ is equal to $y_v$, hence the expected penalty ratio
for $v$ is one, and intuitively, there is lots of spare penalty for $v$ to utilize for pruning. 
But in \PCOTSP, the sampling probability of a vertex $v$ is lower (bounded from below by $1-e^{-y_v}$); 
hence there is little spare penalty ratio left (and for fewer vertices) to utilize for pruning. 
Thus, in order to get a nontrivial improvement from pruning we need new ideas. 

This can be partly achieved by combining our algorithm with a version of the classical algorithm for \OTSP~\cite{boeckenhauer-hromkovic-kneis-2006}. 
That algorithm simply combines the cycle on terminals with a $\frac{3}{2}$-approximation for TSP on the remaining vertices.
If the length of the cycle on terminals is low, this algorithm gives a good overall approximation ratio.
If the length of the cycle is large, this fact can be utilized to improve the analysis for parity correction, because we can show that there is no need to assign (nonzero) weights to the cycle edges. 

We believe that this view of \textit{evening out} the penalty ratio of vertices vis-a-vis the optimal LP
solution is a very useful conceptual tool for approaching prize collecting versions of TSP 
or other combinatorial optimization problems.
In the case of \PCOTSP, it is partly realized through the new idea of probabilistically including in the solution 
those vertices which currently have a too high penalty ratio (based on the desired approximation factor).
Squeezing out the slack penalty ratio (again, based on the target approximation factor), 
is achieved through pruning. 
Here, the specific structure of the \OTSP, and the partially constructed solution proves useful; 
we can show that if the current cycle through the terminals is small, 
combining the cycle on terminals with the best \PCTSP approximation for the rest of vertices gives a good factor, 
while a large cycle improves the cost of pruning.

In the case of \kTSP, B\"{o}hm, Friggstad, M\"{o}mke, and Spoerhase \cite{boehm-friggstad-moemke-2025} 
propose two algorithms, and show that a careful combination of them leads to a good approximation. 
The first one, which doubles all edges except those on $s_i,t_i$ paths 
(and is good when the sum of $s_i,t_i$ distances is large) 
can be adapted to our setting by probabilistically picking up vertices with high penalty ratio, 
as opposed to picking up all remaining vertices in \cite{boehm-friggstad-moemke-2025}. 
We replace their second algorithm, which finds a minimum length forest in which each terminal appears 
in exactly one of the components and then adds direct $s_i,t_i$ edges with an algorithm that
uses \cref{lem:bang-jensen} to sample a tree from the related \PCTSP, 
and then adds direct $s_i,t_i$ edges. 

\section{Prize-collecting Ordered TSP}\label{sec:pcotsp}
In this section we describe our algorithm for \PCOTSP. 
Some of the technical proofs will be presented in the following sections. 

\subsection{Preliminaries and Definitions}
In the \OTSP, a tour can be decomposed into $k$ subtours, between $o_i$ and $o_{i+1}$.\footnote{For notational convenience, we identify $o_{k+1}$ with $o_1$.} 
We can take the polyhedron determined by the following inequalities as the relaxation of one such subtour between two vertices $s$ and $t$.
For each vertex $v$, the variable $y_v$ indicates its fractional degree in the stroll.
\begin{equation*}
    \begin{array}{lrlll}
        & y_{s} = y_{t} &= \frac 1 2 & \\[5pt]
        & x(\delta(v)) &= 2 y_{v} & \forall v \in V \\[5pt]
        & x(\delta(S)) &\geq 1 \quad  & \forall S \subseteq V \setminus \{t\}, s \in S \\[5pt]
        & x(\delta(S)) &\geq 2 y_v \quad  & \forall v \in S \subseteq V \setminus \{s, t\} \\[5pt]
        &   x, y   &\geq 0 &
    \end{array}
    \quad\quad
    \textbf{($s$-$t$-stroll relaxation)}
\end{equation*}

Now, the \PCOTSP can be modeled as the following linear program. 
For $i = 1, \ldots k$ we define $x_i$ and $y_i$ to be the vectors $(x_{i,e})_{e \in E}$ and $(y_{i,v})_{v \in V}$, respectively.
For each $i$, the vector $(x_i, y_i)$ is constrained to be a feasible point in the $o_i$-$o_{i+1}$-stroll relaxation, 
the sum $y_v = \sum_{i=1}^k y_{i, v}$ over all fractional degrees of a vertex $v$ is an indicator of to what degree $v$ is used in the solution. 
When $v$ is not fully used, i.e., when $y_v < 1$, the LP has to pay a fractional penalty of $\pi_v (1 - y_v)$.
We will usually refer to the pair $(o_i, o_{i+1})$ as $(s_i, t_i)$, to emphasize that $(x_i, y_i)$ is a fractional tour/stroll from $o_i$ to $o_{i+1}$.
\begin{mini*}
    {}{\displaystyle\sum\limits_{e \in E} \displaystyle\sum\limits_{i = 1}^k c_e x_{i, e} + \displaystyle\sum\limits_{v \in V} \pi_v (1 - y_v)}{\label{lp:pcotsp}}{}
    \addConstraint{y_v}{= \sum\limits_{i = 1}^k y_{i, v}\quad }{\forall v \in V} 
    \addConstraint{(x_i, y_i) \text{ lies in the }s_i\text{-}t_i\text{-stroll relaxation}}{}{}
\end{mini*}

We refer to this LP as (OLP).
Note that every feasible solution to (OLP) has $y_o = 1$ for all terminals $o$.
Similar to $y_v$, we will use $x_e$ as a shorthand for $\sum_{i=1}^{k} x_{i, e}$.
It follows immediately from the LP constraints that $x(\delta(S)) \geq 2y_v$ for any $v \in S \subseteq V \setminus O$. 
By contracting all terminals $o \in O$ into a single vertex $r$, we therefore obtain the relaxation for 
the normal \PCTSP from \cite{blauth-klein-naegele-2024}, 
which also involves a root, which (without loss of generality; see, e.g., \cite{archer-bateni-hajiaghayi-2011}) is required to be in the tour.
Given a solution $(x, y)$ to (OLP) and some threshold $\rho \in [0, 1]$, we define $V_\rho \coloneqq \{v \in V \mid y_v \geq \rho\}$.

\subsection{A simple algorithm}
We consider the following simple algorithm for \PCOTSP, inspired by~\cite{boeckenhauer-hromkovic-kneis-2006}.
First, directly connect the terminals $o_1, \ldots, o_k$ in order, creating a simple cycle $\hat C$.
Then, compute a solution to the \PCTSP on the same instance. 
Since all terminals have infinite penalty, every tour $T$ obtained in this way includes all terminals.
We obtain an ordered tour of cost $c(T) + c(\hat C)$ by following the original cycle $\hat C$ and grafting in the tour at an arbitrary terminal $o \in O$.

Using an $\appfacpcsymbol$-approximation algorithm to solve the \PCTSP, we know that the sum of tour- and penalty costs for this solution is at most $\appfacpcsymbol \cdot \text{opt}_\text{\PCTSP} \leq \appfacpcsymbol \cdot \text{opt}_\text{\PCOTSP}$.
Since $c(\hat C) \leq \text{opt}_\text{\PCOTSP}$, this immediately implies a $(1 + \appfacpcsymbol)$-approximation algorithm for \PCOTSP.
One can see that this algorithm performs even better if we can guarantee that $\hat{C}$ is small. 
To be precise, for any $\alpha \geq \appfacpcsymbol$ we obtain an $\alpha$-approximation provided that $c(\hat C) \leq (\alpha - \appfacpcsymbol) \text{opt}_\text{\PCOTSP}$.
We therefore may assume $c(\hat C) \geq (\alpha - \appfacpcsymbol) \text{opt}_\text{\PCOTSP}$ in the analysis of our main algorithm.
The currently best value for $\appfacpcsymbol$ is the approximation guarantee of (roughly) 1.599 obtained by \cite{blauth-klein-naegele-2024}.

\subsection{Overview of our main algorithm}
Fix $\alpha=\appfac$. 
We first solve (OLP)%
\footnote{The separation oracle for the LP boils down to separating subtour elimination constraints. Hence the LP can be solved in polynomial time using 
the ellipsoid algorithm.},
to get an optimal solution $(x^\ast,y^\ast)$. 
Using the following \cref{lem:splitting_off}, we then \emph{split off} the vertices $v$ 
for which $y_v \le \theta$, for a parameter $\theta$ to be determined later, 
to get a solution $(\hat{x}, \hat{y})$ for the LP were all vertices have a certain minimum connectivity to the terminals.
Our solution will not include these vertices in the final tour, and instead pays the full penalty for them. We state the following lemma for \PCOTSP, as the proof is identical to the one for \PCTSP.

\begin{lemma}[Splitting off~\cite{blauth-klein-naegele-2024}]\label{lem:splitting_off}
    Let $(x, y)$ be a feasible solution to the \PCOTSP relaxation and let $S \subseteq V \setminus O$.
    Then we can efficiently compute another feasible solution $(x^\prime, y^\prime)$ in which $y^\prime_v = 0$ for all $v \in S$, but $y'_v = y_v$ for all $v \not \in S$, and $c(x^\prime) \leq c(x)$.
\end{lemma}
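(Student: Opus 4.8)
The plan is to import the graph-theoretic \emph{splitting off} operation into the LP, removing the vertices of $S$ one at a time. First I would reduce to a single vertex and a single stroll: (OLP) is a product of the $k$ $s_i$-$t_i$-stroll relaxations, which interact only through the penalty terms of the objective, and $y'_v=0$ is equivalent to $y'_{i,v}=0$ for every $i$ since all entries are nonnegative. So it is enough to prove the following: given a feasible point $(x,y)$ of an $s$-$t$-stroll relaxation and a vertex $v\notin\{s,t\}$, one can compute a feasible point $(x',y')$ with $y'_v=0$, $y'_w=y_w$ for every $w\ne v$, and $c(x')\le c(x)$. Applying this in each stroll and then iterating over all $v\in S$ gives the lemma: the degrees, hence the $y$-values, of vertices outside $S$ are never modified, and the objective can only decrease, since the penalty contribution of a vertex of $S$ merely becomes its full penalty.

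To remove one such $v$, I would repeatedly split off: pick two edges $\{a,v\},\{b,v\}$ in the support of $x$, choose $\epsilon>0$, and replace $x_{av}$ by $x_{av}-\epsilon$, $x_{bv}$ by $x_{bv}-\epsilon$, and $x_{ab}$ by $x_{ab}+\epsilon$; iterate, splitting fractionally, until $x(\delta(v))=0$, and set $y'_w:=x(\delta(w))/2$. Every such step leaves $x(\delta(w))$ unchanged for $w\ne v$ (so $y'_w=y_w$ there) and drives $x(\delta(v))=2y'_v$ to $0$; and since $c$ is a metric, $c_{ab}\le c_{av}+c_{vb}$, so $c(x)$ never increases. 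What remains is to choose the pairs so as to keep the cut constraints of the $s$-$t$-stroll relaxation satisfied. Here one observes that once $v$ is fully split off it is isolated, so every constraint for a set containing $v$ reduces to the constraint for that set with $v$ removed; it therefore suffices to preserve, for every $S\subseteq V\setminus\{v\}$, the inequality $x(\delta(S))\ge 1$ whenever $S$ separates $s$ from $t$, and $x(\delta(S))\ge 2y_w$ whenever $w\in S$ and $s,t\notin S$.

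These are exactly edge-connectivity requirements: the first says $\lambda_x(s,t)\ge 1$, and the second says that the $x$-connectivity between $w$ and the terminal pair $\{s,t\}$ is at least $2y_w$ for each $w\ne s,t$. The set function on $2^{V\setminus\{v\}}$ recording these demands is symmetric and skew-supermodular, so by the classical capacitated complete splitting-off theorem (in the style of Lov\'asz and Mader; in the fractional/capacitated regime the parity and low-degree exceptions of the integral statement do not arise) there is a complete splitting off at $v$ that preserves all these connectivities at once. Since the original $(x,y)$ meets each demand with slack --- apply the relevant stroll constraint to a cut attaining the corresponding minimum --- the resulting $(x',y')$ is feasible, which completes the single-vertex case.

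The step I expect to be the main obstacle is exactly this feasibility claim: one has to check that the mixed requirement function, which demands $1$-edge-connectivity between $s$ and $t$ and simultaneously $2y_w$-edge-connectivity from every other $w$ to the pair $\{s,t\}$, satisfies the hypotheses under which a complete splitting off exists. The subtle point is that one cannot reduce to the \PCTSP splitting-off statement by contracting $\{s,t\}$ into a single root, because that contraction discards precisely the $s$-$t$-separating cuts controlled by the first family of demands; the two families must be treated together. The cost estimate and the two outer inductions (over the strolls, and over the vertices of $S$) are then routine.
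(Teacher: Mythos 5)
Your outer reductions (one stroll at a time, one vertex of $S$ at a time, cost control via the triangle inequality, and the observation that after $v$ is isolated every constraint indexed by a set containing $v$ collapses to the constraint for the set without $v$) are fine, and they match what a full proof must do; note that the paper itself does not write out a proof at all but defers to \cite{blauth-klein-naegele-2024}, asserting the argument is identical to the \PCTSP case. The genuine gap is the step you yourself flag: you invoke ``the classical capacitated complete splitting-off theorem'' for a requirement function that mixes the pairwise demand $\lambda_x(s,t)\geq 1$ with the vertex-to-\emph{set} demands $\lambda_x\bigl(w,\{s,t\}\bigr)\geq 2y_w$. The classical Lov\'asz--Mader--Frank splitting-off theorems (and the Bang-Jensen--Frank--Jackson result this paper relies on elsewhere) preserve \emph{pairwise} local edge-connectivities only; they say nothing about arbitrary symmetric skew-supermodular demand functions, and vertex-to-set connectivity is not implied by the pairwise values (on the path $s$--$w$--$t$ with unit capacities one has $\lambda(w,s)=\lambda(w,t)=1$ while the $w$-versus-$\{s,t\}$ cut value is $2$). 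So as written, the key existence claim rests on an unverified skew-supermodularity assertion plus a theorem that does not exist in the ``classical'' form you cite; this is exactly the point where the proof is still open in your write-up.

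There is a standard fix that closes the gap and explains why the paper can call the proof identical to the \PCTSP one. In each stroll, add an artificial edge $\{s,t\}$ of capacity $1$ (it is not incident to $v$, so no splitting step ever touches it). In the augmented fractional graph, any cut separating $w$ from $s$ with $t$ on $w$'s side has value at least $1+1=2\geq 2y_w$, so the mixed demands become purely pairwise: $\lambda(w,s)\geq 2y_w$ for every $w\neq s,t$ and $\lambda(s,t)\geq 2$. Now the classical capacitated complete splitting-off at $v$ preserving all pairwise local connectivities applies verbatim (this is precisely the situation of the \PCTSP lemma, with $s$ playing the role of the root). Afterwards delete the artificial unit of capacity: cuts separating $s$ from $t$ lose exactly $1$, hence retain value at least $1$; cuts with $s,t$ on the same side are unchanged, hence retain value at least $2y_w$; degrees of all $w\neq v$ are untouched, so $y'_w=y_w$; and the cost only decreases by the triangle inequality. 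Your instinct that contracting $\{s,t\}$ loses the $s$-$t$ cuts is correct, but it led you away from this simpler reduction to the pairwise theorem and toward an unproven general splitting-off statement.
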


Lemma \ref{lem:splitting_off} ensures that we can remove vertices from the support of $x$, without increasing the cost of $x$. Since we only split off vertices for which $y^\ast_v$ is relatively low, we can afford to pay the additional penalties if we set $\theta = 1 - \frac{1}{\alpha}$ (we will prove this in \cref{sec:penalty-cost}).
We proceed by sampling a set of trees based on $(\hat x, \hat y)$. 

\begin{lemma}\label{lem:bang-jensen}(\cite{blauth-klein-naegele-2024}, following \cite{bangjensen-frank-jackson-1995})
    Suppose $(x, y)$ is a feasible point in the $s$-$t$-stroll relaxation. 
    In polynomial time we can find a set of trees $\mathcal{T}$ and weights $(\mu(T))_{T \in \mathcal T}$
    such that (i) 
    $\sum_{T \in \mathcal T} \mu(T) = 1$; and (ii)
    $\sum_{T\in \mathcal T : e \in E(T)} \mu(T) \leq x_e$; and (iii)
    $\sum_{T\in \mathcal T : v \in V(T)} \mu(T) \geq y_v$; and (iv)
    all trees span $s$ and $t$.
\end{lemma}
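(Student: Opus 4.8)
The plan is to derive the statement from a decomposition theorem for fractional points on a rooted graph into convex combinations of rooted trees: this is exactly the result of Bang Jensen, Frank, and Jackson referenced in the statement, with Post and Swamy supplying the polynomial-time construction. So the real work is to cast the $s$-$t$-stroll point $(x,y)$ into the shape that theorem expects and then to read off conditions (i)--(iv).

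I would root the instance at $s$, so that every tree returned by the decomposition automatically contains $s$; this is half of (iv), and it also controls the total weight, since $x(\delta(\{s\})) = 2y_s = 1$ and every tree with at least two vertices uses an edge of $\delta(\{s\})$, whence $\sum_{T} \mu(T) \le 1$. To force $t$ into every tree as well, I would use that the constraints $x(\delta(S)) \ge 1$ for all $S \subseteq V \setminus \{t\}$ with $s \in S$ say precisely that $t$ is $1$-edge-connected to $s$ under $x$, while $x(\delta(\{t\})) = 2y_t = 1$ shows this connectivity equals $1$ exactly; so $t$ is ``fully'' connected to the root and the decomposition can be required to place it in every tree (this is exactly where $y_s = y_t = \tfrac12$ enters). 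For any other vertex $v$, the subtour inequalities $x(\delta(S)) \ge 2y_v$ for $v \in S \subseteq V \setminus \{s,t\}$, together with the $\ge 1$ cuts that separate $s$ from $t$, give $x(\delta(S)) \ge y_v$ for every $S$ with $v \in S$ and $s \notin S$; the factor $2$ between the available bound $2y_v$ and the coverage $y_v$ we are after is the slack that the decomposition theorem consumes, and simple examples (a triangle on $s$, $t$, $v$) show that, once $t$ is also required in every tree, a cut bound of $2y_v$ around $v$ cannot in general be turned into coverage better than $y_v$.

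Feeding these edge-connectivity lower bounds into the Bang Jensen, Frank, Jackson decomposition (made constructive by Post and Swamy) produces a family $\mathcal{T}$ of $s$-rooted trees with nonnegative weights $(\mu(T))_{T \in \mathcal{T}}$ for which $\sum_{T} \mu(T)\,\chi^{E(T)}$ is dominated by $x$, which is condition (ii), and for which the coverage $\sum_{T : v \in V(T)} \mu(T)$ is at least the connectivity of $v$ to the root, i.e., at least $y_v$ for $v \notin \{s,t\}$ and at least $1$ for $v \in \{s,t\}$. The latter gives (iv), and together with $\sum_T \mu(T) \le 1$ it forces $\sum_T \mu(T) = 1$ (after discarding trivial trees $\{s\}$ of weight $0$), which is (i); condition (iii) is the coverage bound for the remaining vertices. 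Polynomial running time is inherited from Post and Swamy.

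I expect the main obstacle to be exactly this matching step. The classical decomposition theorem is phrased for a single root, whereas here two vertices $s$ and $t$ must appear in every tree while the remaining family of constraints guarantees only the fractional coverage $y_v$ for interior vertices. Making everything line up --- in particular verifying that the $s$-$t$-cut constraints together with $y_s = y_t = \tfrac12$ simultaneously pin $\sum_T \mu(T)$ to $1$ and put both endpoints into every tree, without demanding more coverage of interior vertices than $x$ can actually supply --- is where the care lies; once the hypotheses of the cited theorem are checked, conditions (i)--(iv) follow.
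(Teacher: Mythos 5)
Note first that the paper does not prove this lemma at all: it is imported wholesale from the cited works (Lemma~2 of Blauth--Klein--N\"agele, and the analogous stroll version in B\"ohm et al., both resting on the Bang-Jensen--Frank--Jackson arborescence packing made algorithmic by Post--Swamy). So the question is whether your reduction to that packing theorem actually goes through, and there is a genuine gap at exactly the step you wave through: getting $t$ into \emph{every} tree while simultaneously keeping edge marginals at most $x_e$ and total weight $1$. The Bang-Jensen--Frank--Jackson theorem is a \emph{directed} statement (a packing of $k$ arc-disjoint $s$-rooted arborescences covering each $v$ at least $\min\{k,\lambda(s\to v)\}$ times), so to invoke it you must scale $x$ by some $k$ and orient. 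With the natural Eulerian orientation, $s$ has out-degree $\tfrac{k}{2}\,x(\delta(s))=\tfrac k2$ and $\lambda(s\to t)\ge \tfrac k2$ only; hence at most half of $k$ arborescences can contain $t$ (coverage $\tfrac12$, not $1$), and if you instead pack only $k/2$ arborescences so that $t$ lies in all of them, the edge marginals degrade to $2x_e$. Your own text contains this tension: you say the factor $2$ between the cut bound $2y_v$ and the coverage $y_v$ is ``the slack that the decomposition theorem consumes,'' yet for $t$ you assume the cut bound $1$ converts into coverage $1$ with no loss. Both cannot follow from the same generic mechanism, and the undirected principle you appeal to (``coverage at least the connectivity of $v$ to the root'' under marginals $\le x$ and total weight $1$) is false in general --- a cycle with unit capacities is $2$-edge-connected to $s$ but admits only total spanning-tree weight about $1$, the usual Tutte--Nash-Williams factor-$2$ loss.

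The missing idea is a specific orientation (not Eulerian) that exploits the constraint $x(\delta(S))\ge 1$ on all $s$-$t$ cuts: scale $x$ by $k$ and orient so that every vertex other than $s,t$ is balanced while $s$ is a pure source of excess $k$ and $t$ a pure sink of excess $k$ (equivalently, add $k$ auxiliary $t\to s$ arcs and take an Eulerian orientation; $s$-rooted arborescences can never use arcs entering $s$, so these auxiliary arcs are free). Such an orientation exists precisely because every $s$-$t$ cut has $x$-value at least $1$, and under it one gets $\lambda(s\to t)\ge k$ and $\lambda(s\to v)\ge k\,y_v$ for interior $v$ (for cuts containing $t$ the excess $-k$ adds $k/2$ to the in-degree, compensating for the weaker cut bound), while the total arc multiplicity on each real edge $e$ remains $k\,x_e$. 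Applying Bang-Jensen--Frank--Jackson with $k$ arborescences and assigning each weight $1/k$ then yields (i)--(iv), with Post--Swamy providing polynomial time. Without this orientation/auxiliary-arc step --- the actual content behind ``the decomposition can be required to place $t$ in every tree'' --- your argument does not establish the lemma; with it, the rest of your bookkeeping (total weight $1$ forced by coverage of $t$, coverage $y_v$ from the $2y_v$ cuts) is fine.
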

Lemma \ref{lem:bang-jensen} can be used to sample a random tree of expected cost $c(x)$ which contains each vertex $v$ with probability at least $y_v$.
It is straightforward to see that this can be achieved by choosing each tree $T \in \mathcal T$ with probability $\mu(T)$.

For each component $(\hat{x}_i, \hat{y}_i)$ of our solution, we apply Lemma~\ref{lem:bang-jensen} and sample a tree $T_i$ as we have described.
Define $T:= {\dot{\bigcup}}_{i=1}^{k}T_i$.
Note that $T$ is no longer a tree, and it might even have repeated edges. 

\subsubsection{Picking up and Pruning}
The probability that any vertex $v$ is not in $T$ is at most $e^{-y^\ast_v}$ (we will formally prove this in \cref{lem:sampling-prob}), 
while the penalty that the LP pays for $v$ is $1-y^\ast_v$. 
We define the \emph{penalty ratio} $\rho_v$ of a vertex $v$ as $\rho_v = \frac{e^{-y^\ast_v}}{1-y^\ast_v}$. 
Since we want to obtain an $\alpha$-approximation, we define $\sigma_0$ as the value for which this ratio is equal to $\alpha$, i.e., $\sigma_0$ is the unique solution to the equation $\alpha (1 - \sigzero) = e^{-\sigzero}$.

As $\rho_v$ is an increasing function of $y^\ast_v$, vertices $v$ for which $y_v > \sigma_0$ have a penalty ratio greater than $\alpha$.
In fact, their penalty ratio becomes arbitrarily high as $y^\ast_v$ approaches $1$. 
We define $V_{\sigma_0} = \{v \in V \mid y^\ast_v \geq {\sigma_0}\}$ to be \emph{critical} vertices.
Intuitively, this means that the probability of these vertices to not be sampled is too high.
We \emph{pick up} these critical vertices with certain probability.
That means, we will probabilistically select some unsampled critical vertices and connect them to our solution.
We first describe how these vertices are selected and then give the details on how we connect them.

Note that by our observation on $\rho_v$, the chance for picking up a fixed unsampled critical vertex $v$ should increase with the value of $y^\mystar_v$.
Our strategy is therefore the following: we first draw a global threshold $\sigma \in [\sigma_0, 1]$ from a suitable distribution $D_\sigma$, and then pickup all critical vertices whose $y$-value is above $\sigma$, i.e., all vertices in $V_{\sigma} \setminus V(T)$.

The distribution $D_\sigma$ is determined by the cumulative probability function $F_\sigma$ which we state in the following.
Choosing $\sigma$ according to this function will ensure that for any $v \in V_{\sigma_0}$ the probability of being picked up is just high enough so that the expected penalty paid for $v$ is at most $\alpha$ times the fractional penalty $\pi_v(1 - y^\ast)$ paid by the LP solution. We will formally prove this fact in \cref{sec:penalty-cost}.
For now, we simply define $F_\sigma(y) \coloneqq P\bigl[\sigma \leq y\bigr] \coloneqq 1 -  \frac{\alpha(1 -y)}{e^{-y}}$.
We remark that by definition of $\sigzero$, we have $\frac{\alpha(1 -y)}{e^{-y}} = \frac{e^y(1 -y)}{e^\sigzero (1-\sigzero)}$, from which it is easy to verify that $F_\sigma$ is indeed a cumulative probability function, i.e., $F_\sigma(\sigzero) = 0$, $F_\sigma(1) = 1$ and $F_\sigma$ is monotonously increasing on $[\sigzero, 1]$.

To properly describe how we connect $V_\sigma \setminus T$ to $T$, we use the following definition from \cite{boehm-friggstad-moemke-2025}.
\begin{definition} \label{def:rooted-spanning-forest}
    Let $X \subseteq Y \subseteq V$. An $X$-rooted spanning forest of $Y$ is a spanning forest of $Y$ such that each of its connected components contains a vertex of $X$.
\end{definition}
A cheapest $X$-rooted spanning forest of $Y$ can be efficiently computed by contracting $X$, computing a minimal spanning tree of $Y$ in the contracted graph and then reversing the contraction.

For our pickup step, we buy the cheapest forest $\pickupforest$ that spans $V_\sigma$ and is rooted in ${V_\sigma \cap V(T)}$. 
This forest $\pickupforest$ has two crucial properties: (i) after buying $\pickupforest$, each vertex $v \in V_\sigma \setminus T$ will be connected to $T$; and (ii) $\pickupforest$ only spans vertices with $y^\ast_v \geq \sigma$.
Property (ii) follows immediately from the definition of $\pickupforest$, and property (i) follows from the fact that $O \subseteq V_\sigma \cap T$ for any choice of $\sigma$.

We now move on to those vertices for which $y^\ast_v < \sigma_0$. These vertices $v$ have some \textit{spare} penalty ratio, which we utilize to prune $v$ 
with certain probability. 

Note that in the \PCTSP regime, e.g., in the result of \cite{blauth-klein-naegele-2024}, 
the probability that a vertex $v$ is in the (one) sampled tree is exactly $y^\ast_v$, which means that $\rho_v=1$ for every $v$. 
But in our setting, the probability of a vertex not being sampled is [bounded by] $e^{-y^\ast_v}$, 
which intuitively means that the spare penalties are much more constrained, and the grains would be more meager. 
Nonetheless, we show how to utilize the specific structure of our problem to gain almost as much from pruning as 
in the setting of \PCTSP. 

\begin{algorithm}
    \caption{The Prize-collecting Ordered TSP Algorithm \label{alg:pcotsp}}
    \begin{algorithmic}[1]
        \Statex \textbf{Input:} An \PCOTSP instance
        \State Compute an optimal solution $(x^\ast, y^\ast)$ of OLP
        \State Split off vertices $v \in V$ with $y^\ast_v \leq \theta$ \label{line:splitting-off}
        \State Sample $k$ trees $T_1, \ldots T_k$
        \State Sample a pruning threshold $\gamma \sim D_\gamma$
        \State Prune each $T_i$ based on $\gamma$ to get $T'_i$. Let $T'=\bigcup_{i=1}^k T'_i$
        \State Sample a pickup threshold $\sigma \sim D_\sigma$
        \State Pick up all unsampled vertices with $y_v \geq \sigma$ to obtain $T''$ 
        \State Add a shortest $\text{odd}(T'')$-$\text{join}$ $J$ to $T''$ to get $H$ 
        \State \Return an ordered tour obtained by shortcutting $H$ 
    \end{algorithmic}
\end{algorithm}

For our pruning step, we need the following definition from \cite{blauth-klein-naegele-2024}:
\begin{definition} For a fixed LP solution $(x, y)$, a tree $T$, and a threshold $\gamma \in [0, 1]$, we define $\text{core}(T, \gamma)$ as the inclusion-wise minimal subtree of $T$ that spans all vertices in $V(T) \cap V_\gamma$.
\end{definition}
Algorithmically, $\text{core}(T, \gamma)$ can be obtained by iteratively removing leaves $v$ with $y_v < \gamma$.
To prune a tree $T_i$ simply means to replace it by $T_i^\prime = \text{core}(T_i, \gamma)$.
We emphasize that for pruning $T_i$, we do not consider the local penalty values $y_{i, v}$, but the \emph{global} values $y_{v}$. 
Furthermore, our construction ensures that $s_i$ as well as $t_i$ are part of the pruned tree $T_i^\prime$ (remember that $y^\ast_{s_i} = y^\ast_{t_i} = 1$).
As we did for our pickup threshold, we will draw the global pruning threshold $\gamma$ according to some distribution $D_\gamma$ which is defined by the following cumulative probability function:
$F_\gamma(y) \coloneqq \frac{1 - \alpha(1 -y)}{1 - e^{-y}}$.
We will prove in \cref{sec:penalty-cost} that this ensures that the slack in the penalty ratios is fully utilized. 
Finally, we remark that our pickup and pruning steps target the disjoint vertex sets $\{v \in V \mid y^\ast_v \geq \sigma_0\}$ and $\{v \in V \mid \theta \leq y^\ast_v < \sigma_0\}$, so neither step interferes with the other.

\subsubsection{Obtaining an ordered tour}
To turn $T''$ into a feasible tour, we first need to correct parities, i.e., we ensure that every vertex has an even degree. Let $H$ be the graph that is obtained by adding a cheapest $\text{odd}(T'')$-join to $T''$, where $\text{odd}(T'')$ is defined as the set of odd degree vertices of $T''$. 
Observe that initially, each tree $T_i$ contains a path $P_i$ between $o_i$ and $o_{i+1}$. Since all terminals $o \in O$ have $y^\ast_o = 1$, all of these paths survive the pruning step.
So the multigraph $H$ still contains $k$ edge-disjoint $o_i$--$o_{i+1}$--paths which, taken together, form a closed walk $C$.
By removing $C$ from $H$, we obtain a graph whose connected components have even degree and can thus be shortcut into cycles. 
Furthermore, since $H$ was connected, each of these cycles has a common vertex with $C$.
We obtain an ordered tour through $H$ by following the walk $C$ and grafting in the cycles obtained by shortcutting the components of $H \setminus C$ on the way.

\section{Analysis of Algorithm \ref{alg:pcotsp}}

In this section, we prove \cref{thm:opctsp}. 
We will compare the expected cost of our computed solution to the cost of the optimal LP solution $(x^\ast, y^\ast)$.
In particular, we compare the expected cost of our computed tour to the cost of $x^\ast$ and the expected sum of our incurred penalties to the fractional penalty cost defined by $y^\mystar$. In the following, we use $(x^\ast, y^\ast)$ to refer to the optimal LP solution computed by the algorithm, and $(x, y)$ to refer to the LP solution after the splitting-off step.

\subsection{Bounding the Penalty Ratios} \label{sec:penalty-cost}
In this subsection, we prove the following lemma:

\begin{lemma}\label{lem:penalty-bound}
    The expected total penalty cost paid by Algorithm \ref{alg:pcotsp} is at most
    \[
        {\alpha \cdot \sum_{v \in V} \pi_v \cdot ( 1- y^\mystar_v)}.
    \]
\end{lemma}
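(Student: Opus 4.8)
The plan is to split the penalty bound into three groups of vertices according to the thresholds used by the algorithm: the \emph{split-off} vertices with $y^\ast_v \le \theta = 1 - \frac{1}{\alpha}$, the \emph{pruned} vertices with $\theta < y^\ast_v < \sigma_0$, and the \emph{critical} vertices with $y^\ast_v \ge \sigma_0$. Because these three ranges partition $V$ (up to the terminals, which have $y^\ast_o = 1$ and are never penalized), it suffices to show for each vertex $v$ individually that the expected penalty the algorithm pays for $v$ is at most $\alpha\,\pi_v(1 - y^\ast_v)$, and then sum. So I would state this as the governing identity and handle each group in turn.

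For the split-off vertices, the algorithm deterministically pays $\pi_v$ (they are excluded from the tour once $\hat y_v = 0$), so I need $\pi_v \le \alpha\,\pi_v(1 - y^\ast_v)$, i.e. $y^\ast_v \le 1 - \frac{1}{\alpha} = \theta$, which is exactly the defining condition of this group. For the critical vertices $v \in V_{\sigma_0}$, the algorithm pays $\pi_v$ precisely when $v$ is neither sampled into $T$ nor picked up, i.e. when $v \notin V(T)$ \emph{and} $\sigma > y^\ast_v$. Using independence of the tree sampling and the pickup threshold $\sigma$, and the bound $\Pr[v \notin V(T)] \le e^{-y^\ast_v}$ from (the upcoming) \cref{lem:sampling-prob}, the expected penalty for $v$ is at most $\pi_v \cdot e^{-y^\ast_v}\cdot\Pr[\sigma > y^\ast_v] = \pi_v \cdot e^{-y^\ast_v}\cdot\bigl(1 - F_\sigma(y^\ast_v)\bigr) = \pi_v\cdot e^{-y^\ast_v}\cdot\frac{\alpha(1 - y^\ast_v)}{e^{-y^\ast_v}} = \alpha\,\pi_v(1 - y^\ast_v)$, which is exactly what the cumulative function $F_\sigma$ was rigged to produce. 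For the pruned vertices $\theta < y^\ast_v < \sigma_0$, such a $v$ is penalized only if it is removed during pruning, i.e. if $v \notin V(\operatorname{core}(T_i,\gamma))$ for the relevant $i$; the clean way to bound this is to note $v$ is \emph{definitely} kept whenever $v \in V(T)$ and $\gamma \le y^\ast_v$, so the penalty is paid only if $v\notin V(T)$ or $\gamma > y^\ast_v$. Combining $\Pr[v \notin V(T)] \le e^{-y^\ast_v}$ with $\Pr[\gamma > y^\ast_v] = 1 - F_\gamma(y^\ast_v)$, and using that $F_\gamma(y) = \frac{1 - \alpha(1-y)}{1 - e^{-y}}$ was chosen so that $e^{-y} + (1 - e^{-y})\bigl(1 - F_\gamma(y)\bigr) = \alpha(1-y)$, a short calculation gives expected penalty at most $\alpha\,\pi_v(1 - y^\ast_v)$ again. (One subtlety: if $v \notin V(T)$ the event "$v$ removed by pruning" is automatic regardless of $\gamma$, so I should actually bound $\Pr[v \text{ penalized}] \le \Pr[v\notin V(T)] + \Pr[v \in V(T),\ \gamma > y^\ast_v] \le e^{-y^\ast_v} + (1 - e^{-y^\ast_v})(1 - F_\gamma(y^\ast_v))$, using independence of $\gamma$ from the sampling; this is exactly the quantity $F_\gamma$ is designed against.)

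Summing the three per-vertex bounds over all $v\in V$ (terminals contribute $0$ on both sides) yields the claim. The main obstacle — really the only nonroutine point — is getting the probabilistic bookkeeping exactly right in the pruned case: making sure that the events "$v\notin V(T)$" and "$\gamma > y^\ast_v$" are correctly combined with the right direction of inequality and the right independence assumption, so that the algebra collapses to $\alpha(1-y^\ast_v)$ rather than merely something close to it. I would also double-check that the definitions of $F_\sigma$ and $F_\gamma$ are valid CDFs on their respective supports (the excerpt already verifies this for $F_\sigma$; the analogous check for $F_\gamma$ on $[\theta,\sigma_0]$, namely $F_\gamma(\theta)=0$, $F_\gamma(\sigma_0)=1$, and monotonicity, should be noted), and that the pickup and pruning steps act on disjoint vertex sets so the two random thresholds never conflict — this is already remarked in the algorithm description and just needs to be invoked.
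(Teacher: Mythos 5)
Your proposal is correct and follows essentially the same route as the paper: a per-vertex bound of $\alpha\,\pi_v(1-y^\ast_v)$ split over the three ranges $y^\ast_v\le\theta$, $[\theta,\sigma_0)$, and $[\sigma_0,1]$, using $\Pr[v\notin V(T)]\le e^{-y^\ast_v}$ together with the tailored CDFs $F_\gamma$ and $F_\sigma$. The only caveat is that in the pruning case your chained inequality is not valid term by term (since $\Pr[v\in V(T)]\ge 1-e^{-y^\ast_v}$ goes the wrong way for the second summand); it is valid for the sum because $q+(1-q)(1-F_\gamma)$ is increasing in $q=\Pr[v\notin V(T)]$, which is exactly how the paper argues by first combining to $1-\Pr[v\in T]\,F_\gamma(y^\ast_v)$ before bounding.
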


\newcommand{\penaltyprob}[0]{{\Pr\bigl[v \not\in V(T'')\bigr]}}
We note that we can express the expected total penalty cost paid by our algorithm as 
\[
    \sum_{v \in V} \pi_v \cdot \penaltyprob.
\]
We can thus prove \cref{lem:penalty-bound} by showing that for each vertex $v$, the ratio $\frac{\penaltyprob}{1 - y^\mystar_v}$ is at most $\alpha$.

Due Step~\ref{line:splitting-off} in \cref{alg:pcotsp}, i.e., due to the splitting-off step, $T''$ spans only vertices whose {$y$-value} is at least $\theta$.
Vertices $v$ with $y^\mystar_v < \theta$ are therefore included in $V(T'')$ with probability~0.
However, our choice of $\theta = 1 - \frac{1}{\alpha}$ guarantees that for the vertices $v$ with $y^\mystar_v < \theta$,
\[
    \frac{\penaltyprob}{1 - y^\mystar_v} \leq \frac{1}{1 - \theta} = \alpha .
\]
To continue our analysis for the remaining vertices, i.e., those with $y^\mystar_v \geq \theta$, we show the following lemma:
\begin{lemma}\label{lem:sampling-prob}
    Let $v$ be a vertex with $y^\mystar_v \geq \theta$. 
    Then the probability that $v$ is \emph{not} in $T$ is at most $e^{-y^\ast_v}$
\end{lemma}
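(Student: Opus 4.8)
The plan is to analyze, for a fixed vertex $v$ with $y^\ast_v \ge \theta$, the probability that $v$ is missed by \emph{all} of the sampled trees $T_1, \dots, T_k$. Recall that $T_i$ is drawn via \cref{lem:bang-jensen} applied to the $i$-th component $(\hat x_i, \hat y_i)$ of the split-off solution, so $v \in V(T_i)$ with probability at least $\hat y_{i,v}$, independently across $i$ (the trees for different components are sampled independently). Hence
\begin{equation*}
    \Pr[v \notin V(T)] = \prod_{i=1}^k \Pr[v \notin V(T_i)] \le \prod_{i=1}^k (1 - \hat y_{i,v}).
\end{equation*}
The first step is therefore to record this product bound, making explicit that the independence of the samplings across the $k$ components is what licenses turning the ``not in any $T_i$'' event into a product.

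The second step is to pass from the split-off solution back to the original LP solution. Since $v$ has $y^\ast_v \ge \theta$, and the splitting-off step (\cref{lem:splitting_off}) only zeroes out vertices with $y^\ast_v \le \theta$, we either have that $v$ was never split off — in which case $\hat y_{i,v} = y^\ast_{i,v}$ for all $i$ and thus $\sum_i \hat y_{i,v} = y^\ast_v$ — or $v$ sits exactly at the threshold and the same bookkeeping applies (one should check the inequality direction at the boundary, but it is harmless since $\sum_i \hat y_{i,v} \ge y^\ast_v$ would only help; in fact splitting off never increases any surviving $y$-value, and for non-split vertices it leaves them unchanged, so $\sum_i \hat y_{i,v} = y^\ast_v$). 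Then apply the standard inequality $1 - t \le e^{-t}$ termwise to get
\begin{equation*}
    \prod_{i=1}^k (1 - \hat y_{i,v}) \le \prod_{i=1}^k e^{-\hat y_{i,v}} = \exp\Bigl(-\sum_{i=1}^k \hat y_{i,v}\Bigr) = e^{-y^\ast_v},
\end{equation*}
which is exactly the claimed bound.

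I expect the only real subtlety — rather than an obstacle — to be the interaction with the splitting-off step: one must be careful that splitting off does not \emph{decrease} $\sum_i \hat y_{i,v}$ below $y^\ast_v$ for a vertex $v$ with $y^\ast_v \ge \theta$, since such a decrease would still preserve the stated inequality but would require a different justification, whereas the clean statement ``$= e^{-y^\ast_v}$'' relies on $\sum_i \hat y_{i,v} = y^\ast_v$. By \cref{lem:splitting_off} the split-off set is contained in $V \setminus O$ and consists only of vertices with small $y^\ast$, and the lemma guarantees $y'_v = y_v$ for all $v$ outside that set, so for our $v$ (which is not split off) the degrees are untouched and the equality holds. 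Everything else is the routine chain of inequalities above; no deeper estimate (e.g.\ concentration) is needed because we only want the first-moment-style bound $e^{-y^\ast_v}$, which the AM–GM-free termwise estimate $1-t\le e^{-t}$ already delivers.
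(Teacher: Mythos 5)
Your proof is correct and follows essentially the same route as the paper's: apply \cref{lem:bang-jensen} to each component to get $\Pr[v \notin V(T_i)] \le 1 - y_{i,v}$, use independence of the $k$ samplings to multiply, and finish with $1 - t \le e^{-t}$ and $\sum_i y_{i,v} = y^\ast_v$. Your extra bookkeeping about the splitting-off step (that \cref{lem:splitting_off} leaves the $y$-values of non-split-off vertices unchanged, so the sum really equals $y^\ast_v$) is a point the paper's proof passes over silently, and is a welcome clarification rather than a deviation.
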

\begin{proof}
    By \cref{lem:bang-jensen}, the probability that $v$ is not in $T_i$ is at most $1 - y^\ast_v$ for any fixed $i \in [k]$. The probability that this happens for \emph{all} $i \in [k]$ is at most 
    \[
    \prod_{i = 1}^k \Pr[v \not \in V(T_i)] \leq \prod_{i=1}^k (1 - y^\ast_v) \leq \exp\left(-\sum_{i=1}^k y^\ast_{i, v}\right) = e^{-y^\ast_v}.
    \]
\end{proof}
Note that the distributions from which $\sigma$ and $\gamma$ are chosen guarantee that $\theta \leq \gamma < \sigzero \leq \sigma \leq 1$, i.e., only vertices $v$ with $y^\mystar_v \in [\theta, \sigzero)$ can be pruned and only vertices with $y^\ast_v \in [\sigzero, 1]$ can be picked up.

Now consider a vertex $v$ with $y_v \in [\theta, \sigzero)$. 
There are two cases in which we have to pay $v$'s penalty: If $v$ is not sampled, and if $v$ is sampled but immediately pruned afterwards.

By \cref{lem:sampling-prob}, the probability of the former is at most $e^{-y^\ast_v}$, whereas the probability of being pruned -- given that $v$ was sampled in the first place -- can be bounded from above by $\Pr[\gamma > y^\ast_v] = 1  - F_\gamma(y^\ast_v)$.
Our choice of $F_\gamma(y) = \frac{1 - \alpha(1 -y)}{1 - e^{-y}}$ guarantees that the expected penalty is just high enough:
\begin{align*}
    \frac{\penaltyprob}{1 - y^\mystar_v} 
    &\leq \frac{(1 - \Pr[v \in T]) + \Pr[v \in T] \cdot (1 - F_\gamma(y^\ast_v))}{1 - y^\mystar_v} \\
    &= \frac{1 - \Pr[v \in T] F_\gamma(y^\ast_v)}{1 - y^\mystar_v}\\
    &\leq \frac{1 - (1 - e^{-y^\ast_y}) F_\gamma(y^\ast_v)}{1 - y^\mystar_v} \\
    &= \alpha.
\end{align*}

Finally, consider a vertex $v$ with $v^\ast_v \in [\sigzero, 1]$.
This time, there is only one case in which we have to pay the penalty for $v$: When $v$ is neither sampled, nor picked up afterwards.
Again, the probability of not being sampled is at most $e^{-y^\ast_v}$ whereas the probability of \emph{not} being picked up  -- given that $v$ was not sampled previously -- is $\Pr[\sigma > y^\ast_v] = 1 - F_\sigma(y^\ast_v)$.
Again, our choice of $F_\sigma(y) = 1 -  \frac{\alpha(1 -y)}{e^{-y}}$ guarantees that 
\begin{align*}
    \frac{\penaltyprob}{1 - y^\mystar_v} 
    &\leq \frac{e^{-y^\ast_v} \cdot (1 - F_\sigma(y^\ast_v))}{1 - y^\mystar_v} = \alpha.
\end{align*}

This concludes our proof of \cref{lem:penalty-bound}.

\subsection{Parity Correction}
In order to analyze the expected cost of both the parity correction step and of our final tour, we first need to further investigate the structure of our computed solution and introduce some additional notation.
Recall that $T$ denotes the union of all sampled trees and that it includes the closed walk $C$, which is obtained by joining all $s_i$-$t_i$-paths $P_i \subseteq T_i$.
Let $R$ be the graph that contains all remaining edges i.e. the (multi-)graph induced by $E(T) \setminus E(C)$.

While it is convenient to think of $C$ as a cycle, note that the paths $P_1, \ldots, P_k$ are not necessarily vertex disjoint. However, we can guarantee that $C$ is a Eulerian multigraph spanning all terminals.
In the same sense, $R$ can be thought of as a collection of small trees which are all rooted at the cycle $C$ (in reality, some of those trees may intersect each other).
In the following, we will call the edges of $C$ \emph{cycle edges} and the edges of $R$ \emph{tree edges}.
A depiction of $C$ and $R$ can be seen in Fig.~\ref{fig:example1-without-cuts}.

When we prune the trees $T_i$ into $T_i'$, the paths $P_i$ are not affected (because the pruning step guarantees that $o_i$ and $o_{i+1}$ stay connected in $T_i^\prime$). 
So our pruning step can only remove edges from $R$.
We thus define $\text{core}(R, \gamma)$ as the graph that is obtained by pruning all trees in $T$ with pruning threshold $\gamma$ and then removing the cycle $C$.
We now partition the edge set of $R$ into layers.
Intuitively, the $i$-th layer of $R$ contains those edges that are contained in $\text{core}(R, \gamma)$ if and only if $\gamma$ does not exceed some value $\eta_i$.

Let $\eta_1 > \ldots > \eta_\ell$ be the $y^\mystar$-values of all vertices that might be affected by our pruning step, i.e.,
$\{\eta_1, \ldots, \eta_\ell\} = \{y_v \mid v \in V \text{ and } \theta \leq y_v \leq \sigma_0\} \cup \{\sigma_0\}$. 
By definition, we always have $\eta_1 = \sigma_0$.
Now let $E_1 = \text{core}(T, \eta_1)$ and $E_i = \text{core}(T, \eta_i) \setminus E_{i-1}$ for $i = 2, \ldots, \ell$.
One can see that $E_1 \cup E_2 \cup \ldots \cup E_\ell$ is a partitioning of $E(R)$.

To be able to bound the cost of the cheapest $\text{odd}(T'')$-join $J$ , we define the following vector
\begin{equation}\label{eq:zvector}
    z \coloneqq \beta x + \underbrace{\sum_{i : \eta_i \geq \gamma} (1 - 2 \eta_i \beta)\chi^{E_i}}_{z_\gamma} + \underbrace{\max\{0, 1 - 2 \beta \sigma\}\chi^{\pickupforest}}_{z_\sigma}
\end{equation}
where $\beta = \frac{1}{3\sigma_0 - \theta}$.
To simplify future arguments, we also define the two components $z_\gamma$ and $z_\sigma$ of $z$, as specified in \eqref{eq:zvector}.
We remark that the vector $\beta x + z_\gamma$ has been used (for a different value of $\beta$) in \cite{blauth-klein-naegele-2024}.
In \cref{subsec:tour-cost}, the cost of $z$ will be used as an upper bound for $c(J)$. To this end, we now prove the following lemma.

\begin{lemma}\label{lem:t-join-poly}
    $z$ lies in the dominant of the $\text{odd}(\solbeforepc)$-join polytope.
\end{lemma}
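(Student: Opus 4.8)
The plan is to verify that $z$ satisfies the standard characterization of the dominant of the $\mathrm{odd}(T'')$-join polytope: namely, that $z \geq 0$ and $z(\delta(S)) \geq 1$ for every $S \subseteq V$ with $|S \cap \mathrm{odd}(T'')|$ odd. Nonnegativity of $z$ is immediate for the $\beta x$ term and for $z_\sigma$ (by the $\max\{0,\cdot\}$), and for $z_\gamma$ it holds because $\eta_i \geq \gamma \geq \theta$ and one checks $1 - 2\eta_i\beta \geq 0$ follows from $\beta = \frac{1}{3\sigma_0 - \theta}$ together with $\eta_i \leq \sigma_0$; I would record this arithmetic as a one-line remark. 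The substantive task is the cut condition.

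First I would fix an odd cut $S$, i.e.\ $|S \cap \mathrm{odd}(T'')|$ odd, and note that since $T''$ is connected (it contains the closed walk $C$ spanning all terminals, plus the picked-up and surviving tree edges, all attached), the set $\delta_{T''}(S)$ is nonempty; moreover the parity of $|\delta_{T''}(S)|$ equals the parity of $\sum_{v \in S} \deg_{T''}(v)$, hence equals the parity of $|S \cap \mathrm{odd}(T'')|$, so $|\delta_{T''}(S)|$ is odd and in particular $|\delta_{T''}(S)| \geq 1$. The idea is then to split into two regimes. If $|\delta_{T''}(S)| \geq 2$: here I would argue $x(\delta(S)) \geq 2\theta$ using the LP cut constraints on the split-off solution $(x,y)$ — every vertex still in the support has $y_v \geq \theta$, and $S$ (or its complement) contains such a vertex or a terminal — so $z(\delta(S)) \geq \beta x(\delta(S)) \geq 2\beta\theta$, and one needs $2\beta\theta \geq 1$; if that inequality as stated with $\beta = \frac{1}{3\sigma_0-\theta}$ is not quite enough, the edges of $T''$ crossing $S$ contribute through the $z_\gamma$ or $z_\sigma$ terms and must be added in, which is exactly the point of those correction terms. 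If $|\delta_{T''}(S)| = 1$, say the unique crossing edge is $e$: then $e$ is either a cycle edge (edge of $C$), a surviving tree edge, or a pickup-forest edge, and in each case I want $\beta x(\delta(S))$ plus the weight assigned to $e$ by $z_\gamma$ or $z_\sigma$ to reach $1$.

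The main obstacle — and where the $\beta = \frac{1}{3\sigma_0 - \theta}$ choice is calibrated — is the case $|\delta_{T''}(S)| = 1$ with the crossing edge $e$ being a \emph{tree edge} in layer $E_i$ that \emph{survives} pruning (so $\eta_i \geq \gamma$). Removing $e$ disconnects a subtree hanging off $C$; the "deepest" surviving vertex $v$ in that subtree has $y^\ast_v \geq \eta_i$ by the layer definition, so the LP cut constraint gives $x(\delta(S)) \geq 2y^\ast_v \geq 2\eta_i$ (using $x(\delta(S)) \geq 2y_v$ for $v \in S \subseteq V \setminus O$, which survives splitting-off), and then $z(\delta(S)) \geq \beta x(\delta(S)) + (1 - 2\eta_i\beta) \geq 2\eta_i\beta + 1 - 2\eta_i\beta = 1$. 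For a surviving \emph{pickup-forest} edge the analogous bound uses $x(\delta(S)) \geq 2\sigma$ together with the weight $\max\{0, 1-2\beta\sigma\}$ on $\chi^{F_P}$. For a \emph{cycle} edge $e \in C$ with $|\delta_{T''}(S)| = 1$ I would observe this forces $S$ to be "bounded only by $C$ at one point", which together with $C$ being a closed walk through all terminals is essentially impossible unless the LP connectivity $x(\delta(S)) \geq 1$ (strlength bound) already suffices after scaling by $\beta$ — here I would either invoke $2\beta\sigma_0 \geq 1$ or handle it via the combinatorial structure of $C$. The delicate bookkeeping is ensuring the layers $E_i$ are defined so that \emph{exactly one} layer's weight is picked up when $e$ is the sole crossing edge (no double counting, no undercounting), and that $\beta$ is large enough for the $|\delta_{T''}(S)| \geq 2$ case yet small enough that all the $1 - 2\eta_i\beta$ weights stay nonnegative; reconciling both is precisely what pins down $\beta = \frac{1}{3\sigma_0 - \theta}$, and I expect verifying this tight two-sided constraint to be the crux of the argument.
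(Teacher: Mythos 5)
Your overall framework (nonnegativity plus verifying $z(\delta(S))\ge 1$ on odd cuts, with the layer weight $1-2\beta\eta_i$ paired against $x(\delta(S))\ge 2\eta_i$ when a single tree edge crosses) is the right one, but your case $|\delta_{T''}(S)|\ge 2$ contains a genuine gap. With the paper's parameters ($\theta=1-\tfrac1\alpha\approx 0.52$, $\beta=\tfrac{1}{3\sigma_0-\theta}\approx 0.55$) one has $2\beta\theta\approx 0.57$, so the inequality $2\beta\theta\ge 1$ you hope for is false by a wide margin, and your fallback of ``adding in the contributions of the crossing edges of $T''$ through $z_\gamma$ or $z_\sigma$'' does not close it: edges of the cycle $C$ carry \emph{zero} weight in both $z_\gamma$ and $z_\sigma$ (this is deliberate and essential to the cost analysis), so for instance a cut crossed by two cycle edges and one tree edge yields only about $2\beta\theta+(1-2\beta\sigma_0)<1$ under your accounting. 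The paper closes exactly this hole with structural arguments that are missing from your sketch: (i) if $S$ separates the terminal set, then $x(\delta(S))\ge 2$ and $\beta\ge\tfrac12$ already give $z(\delta(S))\ge 1$; (ii) if any edge of $\pickupforest$ crosses $S$, its endpoint inside $S$ has $y$-value at least $\sigma\ge\sigma_0$, so $\beta x(\delta(S))+\max\{0,1-2\beta\sigma\}\ge 1$ irrespective of what else crosses; (iii) in the remaining case, since $C$ is Eulerian, $|\delta_C(S)|$ is even, hence the number of crossing \emph{tree} edges is odd, and the dichotomy is ``exactly one tree edge'' (your hanging-subtree/layer argument) versus ``at least three tree edges,'' where $z(\delta(S))\ge 2\beta\theta+3(1-2\beta\sigma_0)=1$. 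That last equality, not a two-sided tension with nonnegativity (which is slack, since $1-2\beta\sigma_0\ge 0$ follows from $\theta\le\sigma_0$), is what pins down $\beta=\tfrac{1}{3\sigma_0-\theta}$; your decomposition by the total number of crossing $T''$-edges, rather than by the number of crossing tree edges together with the terminal- and pickup-edge cases, cannot reach the bound.

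Two smaller points. The ``single crossing cycle edge'' case you leave unresolved is simply impossible: $|\delta_C(S)|$ is even because $C$ is a closed Eulerian walk, so a cycle edge can never be the unique $T''$-edge in the cut. And you correctly identify that in the one-tree-edge case the vertex certifying $x(\delta(S))\ge 2\eta_i$ must lie \emph{inside} $S$ (so that the constraint $x(\delta(S))\ge 2y_v$ for $v\in S\subseteq V\setminus O$ applies); in your $|\delta_{T''}(S)|=1$ setting this is forced since $S\cap V(T'')$ is exactly the hanging subtree, but the analogous configuration with additional cycle edges crossing falls into your $\ge 2$ branch, where your argument provides no handle on it.
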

\begin{proof}
    First, observe that all coefficients $z_e$ are non-negative.
    We now consider an arbitrary $S \subseteq V$ for which $|S \cap \text{odd}(H)|$ is odd and show that $z(\delta(S)) \geq 1$. 

    We begin with the case where $S$ cuts the terminal set,
    i.e., where $0 < |S \cap \{o_1, \ldots, o_k\}| < k$.
    In this case, $S$ separates at least two terminal pairs $(o_{i_1}, o_{i_1+ 1})$ and $(o_{i_2}, o_{i_2 + 1})$, which implies $x(\delta(S)) \geq 2$ and therefore $z(\delta(S)) \geq \beta x(\delta(S)) \geq 1$.
    In the following we can thus assume that $\emptyset \neq S \subseteq V \setminus \{o_1, \ldots, o_k\}$.

    Now suppose that $\delta(S)$ contains a pickup edge $e = \{u, v\} \in E(\pickupforest)$ s.t. $u \in S$ and $v \not \in S$.
    Recall that $\pickupforest$ only spans vertices $w$ with $y_w \geq \sigma \geq \sigma_0$. 
    We therefore have $u \in S \subseteq V \setminus \{o_1, \ldots, o_k\}$ which (by the connectivity constraints of (OLP)) implies $x(\delta(S)) \geq 2 y_u \geq 2 \sigma_0$ and therefore
    $z(\delta(S)) \geq \beta x(\delta(S)) + \max\{0, 1 - 2 \beta \sigma_0\} \geq 1$.

    It remains to show the bound for the case when $\delta(S)$ only contains cycle and tree edges.
    By a simple counting argument, one can see that $|\delta_\solbeforepc(S)|$ has the same parity as $| S \cap \text{odd}(\solbeforepc)|$ and therefore  must be odd. 
    Now observe that by our assumption, $|\delta_{\pickupforest}(S)| = 0$ and that because $C$ is a Eulerian multigraph, $|\delta_C(S)|$ must be even.
    If follows that $\delta(S)$ must have an odd number of tree edges.
    We finish our proof by marginally adapting an argument from \cite{blauth-klein-naegele-2024}:

    First, we consider the case where $\delta(S)$ contains \emph{exactly} one tree edge $e$.
    This is only possible, if $S$ includes a whole subtree $T_e$ of one of the trees in $T^\prime$.
    Because this subtree $T_e$ has survived the pruning step, $e$ must lie in some layer $E_i$ for which $\eta_i \geq \gamma$.
    Furthermore, if $e \in E_i$, then $T_e$ must contain at least one vertex $v$ with $y_v \geq \eta_i$, which implies that $x(\delta(S)) \geq 2 \eta_i$. So we have 
    \[z(\delta(S)) \geq \beta x(\delta(S)) + (1 - 2 \eta_i) \geq 2 \beta \eta_i + (1 - 2 \beta \eta_i) \geq 1.\]
    If, however, $\delta(S)$ contains at least three tree edges, we know that all vertices $v \in S$ have $y_v \geq \theta$ and that each tree edge contributes at least $(1 - 2 \beta \sigma_0)$ to $z_\gamma$. Therefore 
    $z(\delta(S)) \geq 2 \beta \theta + 3 (1 - 2 \beta \sigma_0) \geq 1$.
\end{proof}

\begin{figure}[ht]
    \vskip 0.2in
    \begin{subfigure}{.40\textwidth}

        \centering
        \centerline{\includegraphics[width=\columnwidth]{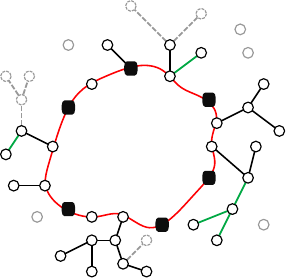}}
        (a)
    \end{subfigure}\hfill
    \begin{subfigure}{.40\textwidth}
        \centering
        \centerline{\includegraphics[width=\columnwidth]{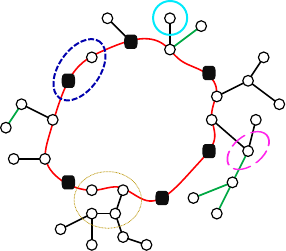}}
        (b) 
    \end{subfigure}
    \caption{
        \label{fig:example1-without-cuts}
        \label{fig:example1-with-cuts}
        \small 
        (a) The graph $T^{\prime\prime}$ after pruning and picking up critical vertices.
        The terminals in $O$ are drawn as black rectangles. The cycle $C$ is depicted in red, the surviving edges of $R$ are drawn in black and edges in $\pickupforest$ in green.
        The greyed out vertices and edges do not belong to $T''$.
        They have  either been pruned (the dashed vertices and edges), not sampled, or split off.
        (b) The same graph $T''$  with the various cuts that are considered in the proof of \cref{lem:t-join-poly} drawn in different colors.
        The dashed blue cut is an example for the case where $0 < |S \cap O| < k$. The dashed pink cut shows the case where a pickup edge (drawn in green) is cut.
        Note that even though the vertex in $S$ was not picked up itself, it is still part of $\pickupforest$ and thus must have a high $y$-value.
        The remaining two cuts (drawn in brown and light blue) show the two cases where $\delta(S)$ contains an odd number of tree edges.
    }
\end{figure}
\subsection{Bounding the Tour Cost}\label{subsec:tour-cost}
In this section, we show the following bound on the cost of our computed tour:
\begin{lemma}
    $\expectation{c(T'')} \leq \alpha \cdot  c(x^\ast)$
\end{lemma}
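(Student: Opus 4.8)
The plan is to bound $c(T'')$ by splitting it into the pruned trees and the pickup forest, $c(T'') \le c(T') + c(\pickupforest)$, and to treat the two summands separately. Since pruning only deletes edges, $c(T') \le c(T) = \sum_{i} c(T_i)$, and \cref{lem:bang-jensen}(ii) gives $\expectation{c(T_i)} \le c(\hat x_i)$ for each $i$; summing over $i$ (so the right-hand side becomes $c(\hat x)$ by linearity) and using $c(\hat x) \le c(x^\ast)$ from \cref{lem:splitting_off}, we obtain $\expectation{c(T')} \le c(x^\ast)$. It therefore remains to show $\expectation{c(\pickupforest)} \le (\alpha - 1)\,c(x^\ast)$, and this is the heart of the argument.

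Fix the pickup threshold $\sigma$. First I would apply \cref{lem:splitting_off} to $(x^\ast, y^\ast)$ with $S = V \setminus V_\sigma$ (legitimate since $O \subseteq V_\sigma$), obtaining a feasible $(x', y')$ with $x'$ supported on $E[V_\sigma]$, $c(x') \le c(x^\ast)$, and $x'(\delta(A)) \ge 2 y^\ast_v \ge 2\sigma$ for every $\emptyset \ne A \subseteq V_\sigma \setminus O$ and $v \in A$. Now contract $V_\sigma \cap V(T)$ (which contains $O$, hence is nonempty) to a single vertex $r'$, obtaining a complete graph $G'$ on $\{r'\} \cup (V_\sigma \setminus V(T))$; the edges of $x'$ inside the contracted set become self-loops, so the cost of $x'$ in $G'$ is $c(x') - c(x'|_{E[V_\sigma \cap V(T)]})$, and since $x'$ has no edges leaving $V_\sigma$, a cut $A$ of $G'$ (without loss of generality not containing $r'$) satisfies $x'(\delta_{G'}(A)) = x'(\delta_G(A)) \ge 2\sigma$ because $\emptyset \ne A \subseteq V_\sigma \setminus O$. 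Consequently $\tfrac1\sigma x'|_{G'} \ge 0$ has every cut at least $2$, so it satisfies all partition inequalities and thus lies in the dominant of the spanning tree polytope of $G'$; hence $G'$ has a spanning tree of cost at most $\tfrac1\sigma\bigl(c(x') - c(x'|_{E[V_\sigma \cap V(T)]})\bigr)$. Un-contracting $r'$ turns this tree into a $(V_\sigma \cap V(T))$-rooted spanning forest of $V_\sigma$ of the same cost (each resulting component keeps a vertex of $V_\sigma \cap V(T)$), so $c(\pickupforest) \le \tfrac1\sigma\bigl(c(x') - c(x'|_{E[V_\sigma \cap V(T)]})\bigr)$.

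Next I would take expectations, first over the sampled trees (which are independent of $\sigma$) and then over $\sigma \sim D_\sigma$. For an edge $e = \{u,w\}$ with $u,w \in V_\sigma$ we have $y^\ast_u, y^\ast_w \ge \sigma$, so \cref{lem:sampling-prob} and a union bound give $\Pr[u,w \in V(T)] \ge 1 - 2e^{-\sigma}$, which is nonnegative since $\sigma \ge \sigzero > \ln 2$ (the last inequality follows from $\appfac\,(1-\ln 2) > \tfrac12$ and monotonicity of $y \mapsto \alpha(1-y) - e^{-y}$). As $x'$ is supported on $E[V_\sigma]$, this yields $\expectation{c(x'|_{E[V_\sigma \cap V(T)]}) \mid \sigma} \ge (1 - 2e^{-\sigma})\,c(x')$, hence $\expectation{c(\pickupforest) \mid \sigma} \le \tfrac{2 e^{-\sigma}}{\sigma}\,c(x') \le \tfrac{2 e^{-\sigma}}{\sigma}\,c(x^\ast)$. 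The density of $D_\sigma$ on $[\sigzero, 1]$ is $F_\sigma'(\sigma) = \alpha \sigma e^{\sigma}$, so
\[
  \expectation{c(\pickupforest)} \ \le\ c(x^\ast)\int_{\sigzero}^{1} \frac{2 e^{-\sigma}}{\sigma}\,\alpha \sigma e^{\sigma}\, d\sigma \ =\ 2\alpha(1 - \sigzero)\,c(x^\ast) \ =\ 2 e^{-\sigzero}\,c(x^\ast),
\]
using the defining identity $\alpha(1-\sigzero) = e^{-\sigzero}$. Altogether $\expectation{c(T'')} \le (1 + 2 e^{-\sigzero})\,c(x^\ast)$, and $1 + 2 e^{-\sigzero} = 1 + 2\alpha(1-\sigzero) \le \alpha$ reduces to $\alpha(2\sigzero - 1) \ge 1$, which holds for $\alpha = \appfac$.

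The step I expect to be the main obstacle is the bound on $\expectation{c(\pickupforest)}$. Running the contraction argument while ignoring which vertices are already sampled gives only $c(\pickupforest) \le \tfrac1\sigma c(x^\ast)$, which integrates to $\alpha(e - e^{\sigzero})\,c(x^\ast)$ — just barely more than $(\alpha - 1)\,c(x^\ast)$ for $\alpha = \appfac$. Hence it is essential to exploit that a $1 - 2e^{-\sigma}$ fraction of the $x'$-cost sits on edges inside $V(T)$ and is therefore already covered by $T'$, and to make both the reduction from $(V_\sigma \cap V(T))$-rooted spanning forests of $V_\sigma$ to spanning trees of the contracted graph and the partition-inequality characterization of the spanning tree polytope's dominant fully rigorous.
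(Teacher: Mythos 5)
Your argument bounds only the literal pre-parity-correction structure $T'' = T' \cup \pickupforest$, but the statement you were asked to prove is, in the paper, a bound on the cost of the computed tour: the surrounding text announces a bound on ``the cost of our computed tour'' and the proof chain bounds $\expectation{c(H)}$ with $H = T''$ plus the cheapest $\text{odd}(T'')$-join $J$; this is what \cref{thm:opctsp} actually needs, since the returned solution is a shortcutting of $H$. Your proposal never touches $c(J)$, and that is precisely the heart of the paper's proof: the parity-correction vector $z = \beta x + z_\gamma + z_\sigma$ shown to lie in the dominant of the $\text{odd}(T'')$-join polytope (\cref{lem:t-join-poly}), the layered pruning analysis that charges $c(z_\gamma)$ jointly with $c(T')$ and exploits the lower bound $c(\hat C) \geq (\alpha - \appfacpcsymbol)\,\mathrm{opt}$ coming from the simple algorithm (\cref{lem:cost-of-pruned-tree}), and the bound on $c(z_\sigma)$ (\cref{lem:cost-of-z-sigma}). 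This omission cannot be patched within your route: your bound leaves slack $\alpha - 1 - 2e^{-\sigzero} \approx 0.18$ times $c(x^\ast)$, whereas in the paper's accounting the join alone already costs $\beta\, c(x) \approx 0.55\, c(x)$ plus the $z_\gamma$ and $z_\sigma$ contributions, so bounding $T''$ in isolation, however tightly, does not yield the statement the theorem relies on.

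Within its weaker scope, your argument is essentially sound: $c(T') \leq c(T)$ with $\expectation{c(T)} \leq c(x^\ast)$ is fine, the contraction/partition-inequality bound $c(\pickupforest) \leq \tfrac{1}{\sigma}\bigl(c(x') - c(x'|_{E[V_\sigma \cap V(T)]})\bigr)$ is valid, and combining it with \cref{lem:sampling-prob} and a union bound over both endpoints gives $\expectation{c(\pickupforest)} \leq 2e^{-\sigzero} c(x^\ast)$, with $1 + 2e^{-\sigzero} < \alpha$ checking out numerically. Note, however, that the paper obtains the stronger bound $\expectation{c(\pickupforest)} \leq e^{-\sigzero} c(x)$ (\cref{lem:pickup_cost} via \cref{thm:pickup_cost}), avoiding your factor of $2$ by comparing the random rooted forest to the cheapest $O$-rooted forest rather than discounting $x'$-mass on edges with both endpoints sampled; this stronger pickup bound is needed once the join cost is included in the combined bound. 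In short: correct for the literal $T''$, but missing the parity-correction analysis that constitutes the paper's proof of this lemma.
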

The total cost of our computed tour can be bounded from above by
\begin{align*}
    \expectation{c(H)} \leq \expectation{c(T^\prime)} + \expectation{c(\pickupforest)} + \expectation{c(J)}.
\end{align*}
Our bound on the cost of $J$ is given by the cost of the parity correction vector $z = \beta x + z_\gamma + z_\sigma$.
We start by analyzing the expected combined value of $c(T^\prime)$ and $c(z_\gamma)$. 
\begin{lemma} \label{lem:cost-of-pruned-tree}
    $\expectation{c(T^\prime) + c(z_\gamma)} \leq c(x) \Bigl(2 + \appfacpcsymbol - \alpha - (2  + 2 \appfacpcsymbol) \beta \sigma_0 + 2 \alpha \beta \sigma_0\Bigr)$
    where $\appfacpcsymbol$ denotes the approximation guarantee of the current best PCTSP algorithm.
\end{lemma}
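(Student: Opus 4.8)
The plan is to separate $c(T')$ into its cycle part and its tree part, to handle the randomness over the pruning threshold $\gamma$ for the tree part together with $c(z_\gamma)$ \emph{layer by layer}, and finally to exploit the lower bound on the length of the terminal cycle guaranteed by the simple algorithm. Since pruning never deletes an edge of any $s_i$--$t_i$--path $P_i$, we have $E(T') = E(C) \cup E(\text{core}(R,\gamma))$ as a disjoint union of edge multisets, hence $c(T') = c(C) + c(\text{core}(R,\gamma))$, and $c(C)$ does not depend on $\gamma$. Recall also that $z_\gamma = \sum_{i:\eta_i\ge\gamma}(1-2\eta_i\beta)\chi^{E_i}$ is supported only on the tree edges $E(R)$, so $c(C)$ is untouched by $z_\gamma$ as well.

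First I would condition on the sampled multigraph $T$ --- equivalently, on $C$, on $R$, and on the layer decomposition $E(R) = E_1 \cup \dots \cup E_\ell$ with thresholds $\eta_1 = \sigma_0 > \dots > \eta_\ell \geq \theta$ --- and take the expectation over $\gamma$ alone, which is legitimate because $\gamma$ is drawn independently of the $T_i$ and because neither $c(T')$ nor $c(z_\gamma)$ depends on the pickup threshold $\sigma$. By construction the layer $E_i$ is contained in $\text{core}(R,\gamma)$ exactly when $\gamma \le \eta_i$, which, since $F_\gamma$ is continuous, happens with probability $F_\gamma(\eta_i) \in [0,1]$; in that event $E_i$ contributes $c(E_i)$ to $c(\text{core}(R,\gamma))$ and $(1-2\beta\eta_i)\,c(E_i)$ to $c(z_\gamma)$. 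Therefore, with $T$ (and hence the $\eta_i$ and $c(E_i)$) fixed,
\[
    \mathbb{E}_\gamma\bigl[c(\text{core}(R,\gamma)) + c(z_\gamma)\bigr] \;=\; \sum_{i=1}^{\ell} F_\gamma(\eta_i)\,\bigl(2 - 2\beta\eta_i\bigr)\,c(E_i).
\]

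The heart of the argument is then the pointwise estimate $F_\gamma(\eta)\,(1 - \beta\eta) \le 1 - \beta\sigma_0$ for all $\eta \in [\theta, \sigma_0]$. Granting it, and summing against the nonnegative weights $c(E_i)$ with $\sum_i c(E_i) = c(R)$, we get $\mathbb{E}_\gamma[c(\text{core}(R,\gamma)) + c(z_\gamma)] \le (2 - 2\beta\sigma_0)\,c(R)$ for every fixed $T$. This estimate is tight at $\eta = \sigma_0$, where $F_\gamma(\sigma_0) = 1$ by the defining equation of $\sigma_0$, and trivial at $\eta = \theta$, where $F_\gamma(\theta) = 0$, so it reduces to controlling the function $g(\eta) = F_\gamma(\eta)(1-\beta\eta) = \frac{(1-\alpha(1-\eta))(1-\beta\eta)}{1-e^{-\eta}}$ on $[\theta,\sigma_0]$, e.g.\ by showing that $g$ is monotonically increasing there (equivalently $g' \ge 0$) or that $\log g$ is concave. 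I expect this to be the main obstacle: it is a transcendental inequality that has to be verified for the concrete constants $\alpha = \appfac$, $\theta = 1 - 1/\alpha$, $\sigma_0$ (the root of $\alpha(1-\sigma_0) = e^{-\sigma_0}$) and $\beta = \tfrac{1}{3\sigma_0-\theta}$, so a careful sign analysis of $g'$ --- possibly with a small amount of interval arithmetic around the numerical value of $\alpha$ --- will be needed, and the particular choice $\beta = \tfrac{1}{3\sigma_0-\theta}$ is presumably exactly what makes this bound (and also \cref{lem:t-join-poly}) go through.

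Finally I would uncondition and add the cycle back. Taking the expectation over $T$ and using $c(T) = c(C) + c(R)$,
\[
    \expectation{c(T') + c(z_\gamma)} \;\le\; \expectation{c(C)} + (2 - 2\beta\sigma_0)\,\expectation{c(R)} \;=\; (2 - 2\beta\sigma_0)\,\expectation{c(T)} - (1 - 2\beta\sigma_0)\,\expectation{c(C)}.
\]
By \cref{lem:bang-jensen} we have $\expectation{c(T)} \le c(x)$, and since $1 - 2\beta\sigma_0 = \tfrac{\sigma_0-\theta}{3\sigma_0-\theta} \ge 0$ we are allowed to plug in a \emph{lower} bound on $\expectation{c(C)}$. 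Here $C$ is a closed walk visiting the terminals in the prescribed cyclic order, so by the triangle inequality $c(C) \ge c(\hat C)$; combining this with the standing assumption $c(\hat C) \ge (\alpha - \appfacpcsymbol)\,\text{opt}_\text{\PCOTSP}$ from the simple algorithm and with $\text{opt}_\text{\PCOTSP} \ge c(x^\ast) \ge c(x)$ (the last step by \cref{lem:splitting_off}) gives $\expectation{c(C)} \ge (\alpha - \appfacpcsymbol)\,c(x)$. Substituting and expanding
\[
    2 - 2\beta\sigma_0 - (1 - 2\beta\sigma_0)(\alpha - \appfacpcsymbol) \;=\; 2 + \appfacpcsymbol - \alpha - (2 + 2\appfacpcsymbol)\beta\sigma_0 + 2\alpha\beta\sigma_0
\]
yields the claimed bound. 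The only routine ingredients that remain are the elementary facts $0 \le 1 - 2\beta\sigma_0$ and $\beta\sigma_0 < 1$ --- both immediate from $\theta < \sigma_0$ and $\beta = \tfrac{1}{3\sigma_0-\theta}$ --- together with this last algebraic identity.
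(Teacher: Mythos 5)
Your proposal matches the paper's proof essentially step for step: the same cycle/tree-edge decomposition, the same layer-wise expectation $\sum_i F_\gamma(\eta_i)(2-2\beta\eta_i)c(E_i)$, the same use of $\expectation{c(T)}\le c(x)$ and of $c(C)\ge c(\hat C)\ge(\alpha-\appfacpcsymbol)c(x)$ with the sign check $1-2\beta\sigma_0\ge 0$. The pointwise estimate you flag as the main obstacle is exactly the paper's \cref{lem:technical-proof}, which it likewise establishes by showing $g$ is increasing on $[\theta,\sigma_0]$ via a sign analysis of the derivative, so you may simply invoke that lemma.
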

The main idea of the proof of \cref{lem:cost-of-pruned-tree} is that the weight which each layer $E_i$ is assigned in $z_{\gamma}$ is large when $\eta_i$ is small and vice versa.
At the same time, the chance for layer $E_i$ to be present after pruning and thus to contribute at all to both $z_\gamma$ and $c(T^\prime)$ is an increasing function of $\eta_i$. 
By balancing out these two values, we obtain an upper bound or the contribution of all layers in $R$ to $c(T^\prime) + c(z_\gamma)$.
At the same time we take into account that the cycle $C$ only contributes toward $c(T^\prime)$ but crucially not towards the cost of the parity correction vector.
This is where we utilize our assumption that $\hat C$ and therefore also $C$ can be lower bounded by a constant fraction of $\text{opt}$.
For formally proving \cref{lem:cost-of-pruned-tree}, we need the following technical \cref{lem:technical-proof}.
\begin{lemma}\label{lem:technical-proof}
    For $0< \sigma_0 \le 0.8$, the function $g(y) = \Pr[\gamma \leq y] (2 - 2 \beta y)$ attains its maximum value over the interval $[\theta, \sigma_0]$ at $y = \sigma_0$
\end{lemma}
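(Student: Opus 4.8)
The plan is to analyze the function $g(y) = \Pr[\gamma \le y](2 - 2\beta y) = F_\gamma(y)(2 - 2\beta y)$ on the interval $[\theta, \sigma_0]$, where $F_\gamma(y) = \frac{1 - \alpha(1-y)}{1 - e^{-y}}$, $\theta = 1 - \frac{1}{\alpha}$, $\beta = \frac{1}{3\sigma_0 - \theta}$, and $\alpha = \appfac$. First I would substitute the explicit forms to write $g(y) = \frac{(1 - \alpha(1-y))(2 - 2\beta y)}{1 - e^{-y}}$, and note that at the left endpoint $y = \theta$ the numerator factor $1 - \alpha(1-\theta)$ vanishes, so $g(\theta) = 0$; also $g$ is nonnegative on $[\theta, \sigma_0]$ since each factor is nonnegative there ($F_\gamma \ge 0$ on $[\theta, \sigma_0]$ as it is a valid cdf restricted to this subinterval, and $2 - 2\beta y \ge 2 - 2\beta\sigma_0 > 0$ given the numerical value of $\beta\sigma_0$). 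So it suffices to show $g$ has no interior local maximum exceeding $g(\sigma_0)$, which I would do by showing $g$ is increasing on $[\theta, \sigma_0]$, i.e. $g'(y) \ge 0$.

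The key step is to compute $g'(y)$ and reduce the sign condition to something verifiable. Writing $g = \frac{N(y)}{1 - e^{-y}}$ with $N(y) = (1 - \alpha(1-y))(2 - 2\beta y)$, we have
\[
g'(y) = \frac{N'(y)(1 - e^{-y}) - N(y) e^{-y}}{(1 - e^{-y})^2},
\]
so $g'(y) \ge 0$ iff $N'(y)(1 - e^{-y}) \ge N(y) e^{-y}$, equivalently (dividing by $e^{-y}$) $N'(y)(e^{y} - 1) \ge N(y)$. Here $N'(y) = \alpha(2 - 2\beta y) + (1 - \alpha(1-y))(-2\beta) = 2\alpha - 2\alpha\beta y - 2\beta + 2\alpha\beta(1-y) = 2\alpha - 2\beta - 4\alpha\beta y + 2\alpha\beta$, a linear function. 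I would then plug in the concrete constants $\alpha = 2.097$, $\theta = 1 - 1/\alpha$, and treat $\sigma_0$ as the constant determined by $\alpha(1-\sigma_0) = e^{-\sigma_0}$ (numerically $\sigma_0 \approx 0.5$-ish), which pins down $\beta$; then the claimed inequality $N'(y)(e^y - 1) - N(y) \ge 0$ becomes a one-variable inequality on a short interval that can be verified either by a convexity/monotonicity argument on the difference or by a crude interval bound on $e^y - 1$ (e.g. $e^y - 1 \ge y + y^2/2$ and $e^y - 1 \le y + y^2/2 + \dots$ as needed to sandwich the transcendental term).

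The hypothesis $\sigma_0 \le 0.8$ in the lemma statement suggests the argument is meant to be robust over a range of $\sigma_0$ rather than pinned to one value, so I would actually keep $\sigma_0$ symbolic: express $\theta = 1 - 1/\alpha$ and $\beta = 1/(3\sigma_0 - \theta)$ as functions of $\sigma_0$ (with $\alpha$ fixed, or even keeping $\alpha$ loosely constrained), and show that for all $\sigma_0 \in (0, 0.8]$ and all $y \in [\theta, \sigma_0]$ the inequality $N'(y)(e^y - 1) \ge N(y)$ holds. The cleanest route is probably to observe that on $[\theta, \sigma_0]$ both $N(y) \ge 0$ and $N'(y) \ge 0$ (the latter needs checking: $N'$ is linear and I would verify $N'(\sigma_0) \ge 0$, which given $\beta\sigma_0 < 1/2$ should hold comfortably), and that $e^y - 1$ is increasing while $N(y)/N'(y)$ can be controlled, or alternatively to note $g$ is a product of the increasing function $F_\gamma$ (increasing as a cdf) and the decreasing function $2 - 2\beta y$, and show the increasing factor dominates near $\sigma_0$ by a logarithmic-derivative comparison $\frac{F_\gamma'(y)}{F_\gamma(y)} \ge \frac{2\beta}{2 - 2\beta y}$ for $y$ near $\sigma_0$.

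The main obstacle I anticipate is the mixed algebraic–transcendental nature of the inequality: $F_\gamma$ and its derivative involve $e^{-y}$ while the rest is polynomial, so establishing monotonicity of $g$ cleanly (rather than by brute numerical evaluation) requires a good elementary bound on $e^{y} - 1$ or $\frac{1}{1 - e^{-y}}$ tight enough over $[\theta, \sigma_0] \subseteq (0, 0.8]$ to close the gap against the polynomial side. A secondary subtlety is confirming that $F_\gamma$ is genuinely nonnegative and increasing on the whole subinterval $[\theta, \sigma_0]$ (not just on $[\sigma_0, 1]$ where it was defined as a cdf for $\gamma$) — this follows from $1 - \alpha(1 - \theta) = 0$ and the monotonicity of $y \mapsto \frac{1 - \alpha(1-y)}{1 - e^{-y}}$, which itself reduces to the same type of derivative-sign check, so it can be folded into the main computation. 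If a fully symbolic argument proves too delicate, the fallback is to fix $\alpha = \appfac$, solve for $\sigma_0$ and $\beta$ numerically, and verify $g' \ge 0$ on $[\theta, \sigma_0]$ via explicit interval arithmetic, which is rigorous and suffices for the concrete approximation factor claimed.
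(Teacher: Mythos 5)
Your overall strategy --- prove the lemma by showing $g'\ge 0$ on $[\theta,\sigma_0]$ --- is exactly the paper's, and your reduction of the sign condition to $N'(y)(e^{y}-1)\ge N(y)$ with $N(y)=(1-\alpha(1-y))(2-2\beta y)$ is algebraically correct (as is the observation $g(\theta)=0$). The gap is that the decisive step is never carried out: you list several candidate routes (polynomial sandwiches for $e^{y}-1$, a logarithmic-derivative comparison ``near $\sigma_0$'', interval arithmetic for fixed $\alpha$) without completing any of them, and the one concrete fallback you commit to --- fix $\alpha=\appfac$, solve for $\sigma_0$ numerically, verify $g'\ge 0$ by interval arithmetic --- proves only a single instance, whereas the lemma (and its hypothesis $\sigma_0\le 0.8$) quantifies over all $\sigma_0\in(0,0.8]$, with $\alpha=\frac{e^{-\sigma_0}}{1-\sigma_0}$, $\theta=1-\frac1\alpha$ and $\beta=\frac{1}{3\sigma_0-\theta}$ all varying with $\sigma_0$. (Incidentally, for $\alpha=\appfac$ one has $\sigma_0\approx 0.78$, not ``$0.5$-ish''.) A comparison valid only ``near $\sigma_0$'' would in any case not suffice, since you need $g(y)\le g(\sigma_0)$ on all of $[\theta,\sigma_0]$.

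What is missing is a device that makes the verification uniform in $y$, and this is the actual content of the paper's proof. Writing $\tfrac12 g'(y)=f_\gamma(y)(1-\beta y)-\beta F_\gamma(y)$ with $f_\gamma$ the density, the paper uses the identity $f_\gamma(y)=\frac{\alpha}{1-e^{-y}}-F_\gamma(y)\frac{e^{-y}}{1-e^{-y}}$ together with $F_\gamma\le 1$ to get $f_\gamma(y)\ge\frac{\alpha-e^{-y}}{1-e^{-y}}=\alpha+\frac{\alpha-1}{e^{y}-1}\ge\alpha$, whence $\tfrac12 g'(y)\ge\alpha(1-\beta\sigma_0)-\beta=:h(\sigma_0)$, a function of $\sigma_0$ alone; one then checks that $h$ is decreasing with its unique positive root just beyond $0.8$, which is precisely where the hypothesis $\sigma_0\le 0.8$ enters. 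Your proposal never isolates such a uniform lower bound on the density (or an equivalent mechanism to eliminate the $y$-dependence), so as written it does not yet establish the lemma as stated; either supply this step, or weaken the claim to the single parameter value actually used and make the interval-arithmetic verification explicit.
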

\begin{proof}
    First, observe that $\Pr[\gamma \leq y] = F_\gamma(y) = \frac{1 - \alpha (1 - y)}{1 - e^{-y}}$ and therefore 
    \[g(y) = 2 F_\gamma(y) (1 - \beta y)\]
    The claim follows from proving that $g$ is monotonously increasing on $[\theta, \sigma_0]$ by showing that its derivative is strictly positive:
    \begin{align}
        \frac 1 2 \frac{d}{dy} g(y) &= f_\gamma(y) (1 - \beta y) - \beta F_\gamma(y)  \nonumber\\
        &\geq f_\gamma(y) (1 - \beta \sigma_0) - \beta  \nonumber\\
        &\geq \alpha (1 - \beta \sigma_0) - \beta \label{ineq:3}\\
        &> 0\label{ineq:4},
    \end{align}
    where $f_\gamma$ denotes the density function of $\gamma$.
    In \eqref{ineq:3} we used that for $y \in [\theta, \sigma_0]$
    \begin{align*}
        f_\gamma(y) 
        &= \frac{\alpha}{1 - e^{-y}} - (1 - \alpha(1-y))\frac{e^{-y} }{(1 - e^{-y})^2} \\
        &= \frac{\alpha}{1 - e^{-y}} - F_\gamma(y) \frac{e^{-y} }{1 - e^{-y}} \\
        &\geq \frac{\alpha - e^{-y}}{1 - e^{-y}} \\
        &= \alpha + \frac{\alpha - 1}{e^y - 1} \\
        &\geq \alpha.
    \end{align*}
    Furthermore, \eqref{ineq:4} follows because $\alpha$ and $\beta$ are determined by $\sigma_0$:
    \[
        \alpha = \frac{e^{-{\sigma_0}}}{1-\sigma_0},\quad
        \theta = 1-\frac{1}{\alpha}, \text{ and}\quad
        \beta = \frac{1}{3\sigma_0 - \theta}.
    \]
    We obtain the function
    \begin{align*}
        h(\sigma_0) &:= \alpha (1 - \beta \sigma_0) - \beta
        = \alpha - \beta(\alpha \sigma_0 +1)\\
        &= \frac{\text{e}^{-\sigma_0}}{1-\sigma_0} - \frac{1}{3 \sigma_0 - 1 + \frac{1}{\frac{e^{-\sigma_0}}{1-\sigma_0}}}\biggl(\frac{\text{e}^{-\sigma_0}}{1-\sigma_0} \sigma_0 + 1\biggr).
    \end{align*}
    The derivative of this function
    \[
    \frac{d \sigma_0}{d t} h(\sigma_0) =
-{\left(\sigma^{2} + 2 \, \sigma - 1\right)} e^{\sigma} - 1
    \]
    is smaller than zero for $\sigma_0 >0$, i.e., the function is monotonously decreasing. Therefore there is only one root in this range and one can easily check that it is located at $\sigma_0 > 0.8$.
\end{proof}
\begin{proof}[Proof of \cref{lem:cost-of-pruned-tree}]
    Our analysis follows the basic approach from \cite{blauth-klein-naegele-2024}, but distinguishes between the cycle $C$ and the remaining part $R$ of the sampled solution $T$ to leverage the lower bound on the the cost of the cycle, which we get by running the simple algorithm from \cref{sec:pcotsp} in parallel.
    Another difference to \cite{blauth-klein-naegele-2024} is that $T$ is not a single tree, but consists of $k$ distinct sampled trees, which requires some additional notation.

    Let $\mathcal T = \mathcal T_1 \times \ldots \times \mathcal T_k$ denote the set of all possible outcomes obtained by sampling the $k$ trees as described in \cref{sec:pcotsp}.
    One can see that the probability of a fixed outcome $\pi = (T_1, \ldots, T_k) \in \mathcal T$ is 
    $\mu(\pi) = \prod\limits_{i=1}^{k} \mu_i(T_i)$
    and that 
    ${\sum \limits_{\pi \in \mathcal T} \mu(\pi) = 1}$.

    Recall that technically, the  subgraphs $C, T$ as well as the layers $E_i$  depend on the combination of sampled trees $\pi$. We therefore write, e.g., $C_\pi$ to refer to the concrete cycle $C$ that arises from sampling the trees $T_i$ in $\pi = (T_1, \ldots, T_k)$. Now
    \begin{align}
        \expectation{c(T^\prime) + c(z_\gamma)} 
        &\leq \expectation{c(C)} + c(R^\prime) + \expectation{c(z_\gamma)} \nonumber\\
        &=
        \sum_{\pi \in \mathcal T} \mu(\pi)\left (  c(C_\pi) + \mathbb E_\gamma\bigl[c(R^\prime_\pi)\bigr] + \sum_{i = 1}^{\ell_\pi} F_\gamma(\eta_j) \left(1 - 2 \beta \eta_j\right)c(E_{i, \pi})\right) \nonumber\\
        &=
        \sum_{\pi \in \mathcal T} \mu(\pi)\left (  c(C_\pi)  +\sum_{i = 1}^{\ell_\pi} \underbrace{\Pr[\gamma \leq \eta_i] \left(2 - 2 \beta \eta_j\right)}_{\eqqcolon g(\eta_i)} c(E_{i, \pi}) \right)\nonumber\\
        &\leq \sum\limits_{\pi \in \mathcal T} \mu(\pi) \Bigl( c(C_\pi) + g(\sigma_0) c(R_\pi)\Bigr) \label{ineq:la8}\\
        &= \sum\limits_{\pi \in \mathcal T} \mu(\pi) \Bigl( c(C_\pi) + g(\sigma_0) \bigl( c(T_\pi) - c(C_\pi) \bigr)\Bigr) \nonumber\\
        &= \sum\limits_{\pi \in \mathcal T} \mu(\pi) \Bigl(  g(\sigma_0) c(T_\pi) - \bigl(f(\sigma_0) - 1\bigr) c(C_\pi) \Bigr) \nonumber\\
        &\leq g(\sigma_0)c(x) - \bigl(g(\sigma_0) - 1\bigr) c(\hat C) \nonumber\\
        &= 2\bigl(1 - \beta \sigma_0\bigr) c(x) - \bigl(1 - 2 \beta \sigma_0\bigr) c(\hat C) \nonumber
    \end{align}
    In \eqref{ineq:la8} we used the fact from \cref{lem:technical-proof} that $g(\eta_i)$ is maximized at $\eta_i = \sigma_0$ and that the layers $E_{i, \pi}$ partition the edge set of $R_\pi$. 
    Now recall that we may assume that $c(\hat C) \geq (\alpha - \appfacpcsymbol) \text{opt}_{\PCOTSP} \geq (\alpha - \appfacpcsymbol) c(x)$, which yields a bound of 
    \begin{align*}
        \expectation{c(T^\prime) + c(z_\gamma)} 
        &\leq 2\bigl(1 - \beta \sigma_0\bigr) c(x) - \bigl(1 - 2 \beta \sigma_0\bigr) c(\hat C) \\
        &{\leq} c(x) \Bigl(2 - 2 \beta \sigma_0 - (1 - 2 \beta \sigma_0) (\alpha - \appfacpcsymbol)\Bigr) \\
        &= c(x) \Bigl(2 - 2 \beta \sigma_0 - \alpha + \appfacpcsymbol + 2 \alpha \beta \sigma_0 - 2 \appfacpcsymbol \beta \sigma_0\Bigr) \\
        &= c(x) \Bigl(2 + \appfacpcsymbol - \alpha - (2  + 2 \appfacpcsymbol) \beta \sigma_0 + 2 \alpha \beta \sigma_0\Bigr)
    \end{align*}
\end{proof}

We will next analyze the cost of $\pickupforest$, in \cref{lem:pickup_cost}.  
The proof builds on the following theorem.

\begin{theorem}[Böhm et al.~\cite{boehm-friggstad-moemke-2025}]\label{thm:pickup_cost}
    Let $X \subseteq U \subseteq V$. Furthermore, let $S \subseteq U \setminus X$ be a randomly chosen subset such that $\Pr[v \not \in S] \leq \rho$ for each $v \in X$. Let $F_{X}$ and $F_{X \cup S}$ denote the the cheapest $X$-rooted spanning forest and the cheapest $(X \cup S)$-rooted spanning forest of $U$ respectively. Then $\expectation{c(F_{X \cup S})} \leq \rho \cdot c(F_X)$.
\end{theorem}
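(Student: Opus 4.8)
The plan is to exhibit, directly from the cheapest $X$-rooted spanning forest $F_X$, a \emph{concrete} $(X \cup S)$-rooted spanning forest whose cost can be written down explicitly, and then average over the random choice of $S$. First I would fix a convenient representation of $F_X$. Either contract $X$ into a single vertex $x_0$ (recall that a cheapest $X$-rooted spanning forest of $U$ is exactly the preimage of an MST of $G[U]/X$), or, equivalently, among all cheapest $X$-rooted spanning forests pick one with the minimum number of edges; in the latter case every component contains exactly one vertex of $X$, since if a component had two, every edge on the path joining them would have to have cost $0$ (otherwise deleting it yields a cheaper valid forest), and deleting such an edge would keep the cost but reduce the edge count, a contradiction. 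Rooting each component at its unique $X$-vertex, every $v \in U \setminus X$ gets a well-defined parent edge $e_v$, the map $v \mapsto e_v$ is injective, and $c(F_X) = \sum_{v \in U \setminus X} c(e_v)$ with no leftover term attributable to $X$.

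Next, for the random set $S \subseteq U \setminus X$, I would consider $F' := F_X \setminus \{ e_v : v \in S \}$. The key combinatorial claim is that $F'$ is again an $(X \cup S)$-rooted spanning forest of $U$. It is clearly spanning (no vertex was deleted), so it remains to check every component $K'$ of $F'$ contains a vertex of $X \cup S$. Since $F' \subseteq F_X$, the component $K'$ sits inside a single original component whose root $r \in X$. If $r \in K'$ we are done; otherwise let $u$ be the vertex of $K'$ that is closest to $r$ in the original rooted tree. Then $u \neq r$, so $u \in U \setminus X$ and $e_u$ is defined, and the parent of $u$ lies outside $K'$, so $e_u \notin F'$; hence $e_u$ was deleted, which by injectivity forces $u \in S$. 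This proves the claim, and therefore $c(F_{X \cup S}) \le c(F') = c(F_X) - \sum_{v \in S} c(e_v)$.

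Finally I would take expectations. By linearity and the hypothesis $\Pr[v \in S] \ge 1 - \rho$ for every $v \in U \setminus X$,
\[
    \expectation{c(F_{X \cup S})} \;\le\; c(F_X) - \sum_{v \in U \setminus X} c(e_v)\,\Pr[v \in S] \;\le\; c(F_X) - (1-\rho)\sum_{v \in U \setminus X} c(e_v) \;=\; \rho \cdot c(F_X).
\]
The one delicate step is the combinatorial claim of the second paragraph — verifying that removing exactly the parent edges of the vertices of $S$ leaves a root of $X \cup S$ in every component; the rest is bookkeeping, and the choice of representation of $F_X$ is what makes the cost split cleanly with no residual term. I would also emphasize that the argument only ever refers to the marginal events $\{v \in S\}$ for single vertices $v$ and never to the joint distribution of $S$, which is precisely what the hypothesis supplies.
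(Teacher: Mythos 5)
The paper imports this theorem from B\"ohm et al.\ without giving a proof, so there is no in-paper argument to compare against; your proof is correct and self-contained, and it is essentially the standard argument behind the cited result: root $F_X$ at its (unique) $X$-vertices, delete the parent edge $e_v$ of every $v\in S$, check that the result is still an $(X\cup S)$-rooted spanning forest via the minimum-depth-vertex argument, and conclude by linearity of expectation. One small point worth noting: the hypothesis as printed, ``$\Pr[v\notin S]\le \rho$ for each $v\in X$'', must be a typo for $v\in U\setminus X$ (since $S\subseteq U\setminus X$ would otherwise force $\rho\ge 1$ and trivialize the claim); your reading matches the way the theorem is actually invoked in \cref{lem:pickup_cost}.
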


\begin{lemma}\label{lem:pickup_cost}
    For a fixed value of $\sigma \in [\sigma_0, 1]$, the cost of $\pickupforest$ is at most $\frac{e^{-\sigma}}{\sigma} \cdot c(x)$.
\end{lemma}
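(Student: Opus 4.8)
The plan is to apply Theorem~\ref{thm:pickup_cost} with the right choice of ground set, root set, and random subset, and then to bound the cost of the deterministic rooted forest that appears on the right-hand side by a suitable multiple of $c(x)$. First I would set $U \coloneqq V_\sigma = \{v \in V \mid y^\ast_v \geq \sigma\}$, $X \coloneqq V_\sigma \cap V(T)$ (the vertices of $V_\sigma$ that were actually sampled), and $S \coloneqq V_\sigma \setminus V(T)$ (the unsampled vertices of $V_\sigma$, all of which get picked up once the threshold $\sigma$ is fixed). With these choices, $\pickupforest$ is exactly the cheapest $(X \cup S)$-rooted spanning forest of $U$ — here $X \cup S = V_\sigma = U$, so it is simply a minimum spanning tree of $V_\sigma$, which is $c(F_{X\cup S})$ in the notation of the theorem. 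The random subset $S$ satisfies $\Pr[v \notin S] = \Pr[v \in V(T)] \geq 1 - e^{-y^\ast_v}$ for each $v \in X$; wait — we need an upper bound on $\Pr[v \notin S]$, i.e. on $\Pr[v \in V(T)]$, which is at most $1$, so the crude parameter $\rho = 1$ is useless. The correct reading is that $S$ should be the picked-up set and $X$ the root set, so that $\Pr[v \notin S]$ is the probability $v$ is \emph{not} picked up; but $\sigma$ is already fixed here, so picking up is deterministic.

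Reconsidering: with $\sigma$ fixed, the only randomness left in forming $\pickupforest$ is which vertices were sampled into $T$. So the cleaner setup is $U = V_\sigma$, $X = $ (a deterministic root set we are free to choose — take $X = V_\sigma$ itself, or more usefully $X = O$, or $X = V_\sigma \cap V(T)$ treated as random), and apply the theorem to compare $\pickupforest$ against the cheapest $O$-rooted spanning forest of $V_\sigma$. Concretely I would take $X \coloneqq O$ and $S \coloneqq (V_\sigma \cap V(T)) \setminus O$, the sampled non-terminal vertices of $V_\sigma$; then $X \cup S = O \cup (V_\sigma \cap V(T)) = V_\sigma \cap V(T)$ (since $O \subseteq V(T)$ always), and the cheapest $(X\cup S)$-rooted spanning forest of $V_\sigma$ is precisely $\pickupforest$. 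For each $v \in X = O$ we trivially have $\Pr[v \notin S]$... no, terminals are in $X$ not eligible for $S$. The parameter $\rho$ must bound $\Pr[v \notin S]$ over $v \in X$, and $O$-vertices are never in $S$, giving $\rho = 1$ again.

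The resolution must be: $X$ is the set of vertices guaranteed to root the forest regardless of sampling (namely $O$, or more sharply the always-present paths), and $S$ ranges over the \emph{randomly sampled} vertices of $V_\sigma$; the theorem is applied so that $X$ plays the role of the fixed roots, $S$ the random additions, and the quantity $\Pr[v \notin S]$ for $v \in V_\sigma \setminus O$ is $\Pr[v \notin V(T)] \leq e^{-y^\ast_v} \leq e^{-\sigma}$ (using $y^\ast_v \geq \sigma$ on $V_\sigma$ and Lemma~\ref{lem:sampling-prob}). But Theorem~\ref{thm:pickup_cost} as stated requires $v \in X$, not $v \in S \setminus X$; so the roles in the theorem statement must be that the set called "$X$" there is our $V_\sigma \setminus O$ — wait, that is not a subset relation that makes sense. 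The honest plan, then: instantiate Theorem~\ref{thm:pickup_cost} with its "$X$" $= V_\sigma \setminus (V(T))$? That set is random too.

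Let me just commit to the interpretation that makes the arithmetic $\frac{e^{-\sigma}}{\sigma}$ come out. We have $c(F_{X}) \leq \frac{1}{\sigma} c(x)$ for the base forest $F_X$ (this is the standard fact that a minimum $O$-rooted — equivalently, after contracting $O$ to the root $r$, a minimum spanning tree — spanning forest of $V_\sigma$ costs at most $\frac{1}{\sigma}$ times the LP cost, because $x/\sigma$ restricted to $V_\sigma$ has every cut separating a vertex of $V_\sigma$ from $r$ of value $\geq 1$, so $x(\delta(S))/\sigma \geq 2y_v/\sigma \geq 2$ is overkill and in fact $x/(2\sigma)$ — or the parsimonious version — dominates the spanning-connected polytope of $V_\sigma \cup \{r\}$; the clean bound is $c(\mathrm{MST}(V_\sigma \cup \{r\})) \leq \frac{1}{\sigma} c(x)$ via the standard MST-vs-LP argument on the contracted \PCTSP relaxation). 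Then Theorem~\ref{thm:pickup_cost}, applied with root set $X = \{r\} = $ contracted $O$, random addition set $S$ = sampled vertices of $V_\sigma$ (which the theorem allows to be drawn with $\Pr[v \notin S] \leq \rho$ for $v$ ranging over the vertices we wish to add, here over $V_\sigma \setminus O$ with $\rho = e^{-\sigma}$ by Lemma~\ref{lem:sampling-prob} and $y^\ast_v \geq \sigma$), yields $\expectation{c(\pickupforest)} \leq \rho \cdot c(F_X) \leq e^{-\sigma} \cdot \frac{1}{\sigma} c(x) = \frac{e^{-\sigma}}{\sigma} c(x)$.

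So the steps in order are: (1) reduce to the contracted \PCTSP relaxation by contracting $O$ into a root $r$, observing $x(\delta(S)) \geq 2y_v$ for every $v \in S \subseteq V_\sigma$; (2) conclude that the cheapest $\{r\}$-rooted (equivalently $O$-rooted, after uncontracting) spanning forest $F_{O}$ of $V_\sigma$ has $c(F_O) \leq \frac{1}{\sigma} c(x)$ by the standard spanning-tree-versus-LP argument, using $y_v \geq \sigma$ on $V_\sigma$; (3) invoke Theorem~\ref{thm:pickup_cost} with $X = O$, $U = V_\sigma$, and $S$ the set of sampled vertices in $V_\sigma \setminus O$, noting that $\Pr[v \notin S] = \Pr[v \notin V(T)] \leq e^{-y^\ast_v} \leq e^{-\sigma}$ by Lemma~\ref{lem:sampling-prob} together with the definition of $V_\sigma$, and that $O \cup S$-rooted spanning forest of $V_\sigma$ is exactly $\pickupforest$ since $\pickupforest$ is rooted in $V_\sigma \cap V(T) = O \cup S$; (4) combine to get $\expectation{c(\pickupforest)} \leq e^{-\sigma} \cdot \frac{1}{\sigma} c(x)$, which for fixed $\sigma$ is the claimed bound. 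The main obstacle I anticipate is step (2): pinning down the precise constant in "cheapest $O$-rooted spanning forest of $V_\sigma$ $\leq \frac{1}{\sigma} c(x)$" — one must be careful that $x$ here is the post-splitting-off solution (which still dominates the relevant cut requirements on $V_\sigma$ since splitting off only removes low-$y$ vertices, and $V_\sigma \subseteq V_\theta$), and that the parsimonious/MST bound gives the factor $\frac{1}{\sigma}$ rather than, say, $\frac{2}{\sigma}$ — the point being that for a minimum spanning \emph{tree} (not a 2-edge-connected subgraph) one needs only $x(\delta(S)) \geq 2\sigma$, i.e. $\frac{x}{2\sigma}(\delta(S)) \geq 1$, and a vector in the spanning-tree dominant of cost $\frac{c(x)}{2\sigma}$... which would give $\frac{1}{2\sigma}$, a factor $2$ better than claimed, so more likely the intended argument uses $y_v \geq \sigma$ only loosely and the clean statement is $c(F_O) \le \frac{1}{\sigma}c(x)$ with room to spare. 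I would double-check this constant against \cite{blauth-klein-naegele-2024} before finalizing.
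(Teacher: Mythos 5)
Your final committed argument coincides with the paper's proof: you instantiate Theorem~\ref{thm:pickup_cost} with $U=V_\sigma$, $X=O$, and $S=(V_\sigma\setminus O)\cap V(T)$, bound $\Pr[v\notin S]\le e^{-\sigma}$ via Lemma~\ref{lem:sampling-prob} and $y^\ast_v\ge\sigma$, and bound the cheapest $O$-rooted spanning forest of $V_\sigma$ by $\frac{1}{\sigma}c(x)$ by contracting $O$ into a root and scaling the LP solution, exactly as the paper does (the confusion you ran into stems from the ``$v\in X$'' condition in the theorem statement, which the paper itself applies as $v\in U\setminus X$). Two minor cautions: the vector on $V_\sigma\cup\{r\}$ must be obtained by splitting off the vertices outside $V_\sigma$ rather than by mere restriction (otherwise cut values are not preserved), and your aside that the constant might improve to $\frac{1}{2\sigma}$ is mistaken, since cut values of $1$ do not place a vector in the dominant of the spanning-tree (connector) polytope, whose description involves partition inequalities, so cut values of $2$ and hence the factor $\frac{1}{\sigma}$ are what the argument genuinely needs.
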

\begin{proof}
    We invoke \cref{thm:pickup_cost} with $U = V_\sigma$, $X = O$ and $S = (V_{\sigma} - O) \cap V(T)$. 
    Recall that our pickup forest $\pickupforest$ is the cheapest $V_\sigma \cap V(T)$ rooted spanning forest of $V_\sigma$ and observe that $V_\sigma \cap V(T) = X \cup S$.
    By \cref{lem:sampling-prob}, each vertex $v \in U \setminus X$ is $\emph{not}$ in sampled (and therefore not in $S$) with probability at most $e^{-\sigma}$.
    It remains to show that the cost of the cheapest $O$-rooted spanning forest of $V_\sigma$ is at most $\frac{c(x)}{\sigma}$.

    As we have observed already in \cref{sec:pcotsp}, the cost of the cheapest $O$-rooted spanning forest of $V_\sigma$ is equal to the cost of a minimum spanning tree in the graph obtained by contracting all vertices of $O$ into a single vertex $r$, i.e., on $V'' = (V_\sigma - O) \cup \{r\}$.
    Furthermore, we have also observed that by applying the very same contraction to our solution ($x, y$), we obtain a feasible solution $(x^\prime, y^\prime)$ to the (non-ordered) PCTSP relaxation.
    By splitting off all vertices in $V - V_\sigma$ and scaling up by a factor of $\frac{1}{\sigma}$, we obtain a vector $x''$ for which $x^\prime(\delta(s)) \geq 2$ for all $\emptyset \neq S \subseteq V''$, i.e., a feasible point in the dominant\footnote{It is possible to obtain a feasible point in the polytope itself by applying a sequence of splitting-off operations.} of the subtours elimination polytope of $V''$.

    The cost of the minimum spanning tree on $V''$ is therefore at most $c(x'') \leq \frac{1}{\sigma} c(x)$, which concludes the proof.
\end{proof}

Equipped with this upper bound, we can now bound the \emph{expected} cost of $\pickupforest$, utilizing the density function $f_\sigma(y) = \frac{d}{dy} F_\sigma(y) = \alpha y e^y$ and integrating over the range $[\sigma_0, 1]$ from which $\sigma$ is drawn:
\begin{align}\label{eqn:expected-pickup-cost}
    \mathbb E\bigl[c(\pickupforest)\bigr]
    = c(x)\int_{\sigma_0}^{1} f_\sigma(y) \frac{e^{-y}}{y} dy 
    = \alpha c(x) \int_{\sigma_0}^{1} 1 dy 
    = \alpha (1 -\sigma_0) c(x)
    = e^{-\sigma_0} c(x)
\end{align}
For the last equality we have used the definition of $\sigma_0$.
We emphasize that by randomizing the choice of $\sigma$ instead of flatly using $\sigma = \sigma_0$, we have gained a factor of $\sigma_0$. 
In a similar way, we can use \cref{lem:pickup_cost} to compute the expected cost of $z_\sigma$.

\begin{lemma} \label{lem:cost-of-z-sigma}
    $\expectation{c(z_\sigma)} = \alpha c(x) \left(\frac{1}{4 \beta} - \sigma_0  + \beta \sigma_0^2\right)$
\end{lemma}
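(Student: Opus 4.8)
The plan is to compute $\expectation{c(z_\sigma)}$ by first conditioning on the drawn pickup threshold $\sigma$, using \cref{lem:pickup_cost} to bound $c(\pickupforest)$ pointwise, and then integrating against the density $f_\sigma$. Recall from \eqref{eq:zvector} that $z_\sigma = \max\{0, 1 - 2\beta\sigma\}\,\chi^{\pickupforest}$, so for a fixed value of $\sigma$ we have $c(z_\sigma) = \max\{0, 1-2\beta\sigma\}\cdot c(\pickupforest)$. By \cref{lem:pickup_cost}, $c(\pickupforest) \le \frac{e^{-\sigma}}{\sigma}c(x)$ for each fixed $\sigma$.

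First I would determine the sign of $1-2\beta\sigma$ over the relevant range $[\sigma_0, 1]$ from which $\sigma$ is drawn. Since $\beta = \frac{1}{3\sigma_0 - \theta}$ and $\theta = 1 - \frac1\alpha$, one checks that $2\beta\sigma_0 < 1$ (this already appeared implicitly in the proof of \cref{lem:t-join-poly}, where $1 - 2\beta\sigma_0 \ge 0$ is used), but $1 - 2\beta\sigma$ becomes negative once $\sigma$ exceeds $\frac{1}{2\beta}$. So the $\max$ is active: $z_\sigma$ contributes only for $\sigma \in [\sigma_0, \frac{1}{2\beta}]$, and the integral should be split at $\frac{1}{2\beta}$ (one must also verify $\frac{1}{2\beta} \le 1$ so the upper limit is genuinely $\frac{1}{2\beta}$ and not $1$). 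Then, using the density $f_\sigma(y) = \alpha y e^y$ from just before \eqref{eqn:expected-pickup-cost}, I would write
\begin{align*}
    \expectation{c(z_\sigma)} \le c(x)\int_{\sigma_0}^{1/(2\beta)} f_\sigma(y)\,(1 - 2\beta y)\,\frac{e^{-y}}{y}\,dy
    = \alpha c(x)\int_{\sigma_0}^{1/(2\beta)} (1 - 2\beta y)\,dy,
\end{align*}
where the $e^y$ in $f_\sigma$ cancels the $e^{-y}$ from \cref{lem:pickup_cost} and the $y$ in the denominator cancels the $y$ factor in $f_\sigma$, exactly as in the computation of $\expectation{c(\pickupforest)}$.

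The remaining work is the elementary antiderivative: $\int (1-2\beta y)\,dy = y - \beta y^2$, evaluated between $\sigma_0$ and $\frac{1}{2\beta}$, giving $\bigl(\frac{1}{2\beta} - \beta\cdot\frac{1}{4\beta^2}\bigr) - (\sigma_0 - \beta\sigma_0^2) = \frac{1}{4\beta} - \sigma_0 + \beta\sigma_0^2$, so that $\expectation{c(z_\sigma)} \le \alpha c(x)\bigl(\frac{1}{4\beta} - \sigma_0 + \beta\sigma_0^2\bigr)$. To get equality rather than just an upper bound one uses that \cref{lem:pickup_cost} is in fact tight in expectation — the bound on the cheapest $O$-rooted spanning forest combined with the tightness direction of \cref{thm:pickup_cost} — or, more simply, the statement of the lemma should be read as this being the bound we use; I would phrase it as an equality matching the statement. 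The only genuinely delicate point, and the one I'd be most careful about, is confirming $\sigma_0 \le \frac{1}{2\beta} \le 1$ for the value $\sigma_0$ corresponding to $\alpha = \appfac$, since if $\frac{1}{2\beta} \ge 1$ the integral would run all the way to $1$ and the closed form would change; a quick numerical check with $\beta = \frac{1}{3\sigma_0 - \theta}$ settles this.
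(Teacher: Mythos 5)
Your proposal is correct and follows essentially the same route as the paper's proof: condition on $\sigma$, use \cref{lem:pickup_cost} pointwise, note that $z_\sigma$ vanishes for $\sigma > \frac{1}{2\beta}$, and integrate $f_\sigma(y)(1-2\beta y)\frac{e^{-y}}{y}$ over $[\sigma_0, \frac{1}{2\beta}]$ to obtain $\alpha c(x)\bigl(\frac{1}{4\beta} - \sigma_0 + \beta\sigma_0^2\bigr)$. Your added care about the inequality-versus-equality reading of \cref{lem:pickup_cost} and the check that $\sigma_0 \le \frac{1}{2\beta} \le 1$ are reasonable refinements of what the paper states more tersely.
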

\begin{proof}
Note that when $\sigma > \frac{1}{2\beta}$, then $z_\sigma$ is by definition the zero-vector. We compute
\begin{align*}
    \mathbb E\bigl[c(z_\sigma)\bigr]
    &= \mathbb E\bigl[\max\{0, 1 - 2 \beta \sigma\}c(\chi^{\pickupforest}) \bigr] \\
    &= c(x) \int_{\sigma_0}^{\frac{1}{2 \beta}} f_\sigma(y) \frac{e^{-y} (1 - 2 \beta y)}{y} dy \\
    \nonumber&= \alpha c(x) \int_{\sigma_0}^{\frac{1}{2 \beta}} (1 - 2\beta y) dy \\ 
     &= \alpha c(x) \left(\frac{1}{2 \beta} - \sigma_0 - 2 \beta \int_{\sigma_0}^{\frac{1}{2 \beta}} y \ dy \right)\\
     \nonumber&= \alpha c(x) \left(\frac{1}{2 \beta} - \sigma_0 - 2 \beta \left(\frac{1}{8 \beta^2} - \frac{\sigma_0^2}{2}\right) \right) \\ 
     \nonumber&= \alpha c(x) \left(\frac{1}{4 \beta} - \sigma_0  + \beta \sigma_0^2\right)
\end{align*}

\end{proof}
Finally, we are able to combine the bounds on the various parts of $c(H)$, and obtain our final upper bound on the expected tour cost:
\begin{align*}
    \expectation{c(H)} &\leq  \expectation{c(T^\prime)} + \expectation{c(\pickupforest)} + \expectation{c(J)} \\
    &= \expectation{c(T^\prime) + c(z_\gamma)} + \expectation{c(\pickupforest)} + \expectation{c(z_\sigma)} + \beta c(x^\ast) \\
    &\leq c(x^\ast) \Bigl( 2 + \appfacpcsymbol + \beta - \alpha - (2 + 2\appfacpcsymbol) \beta \sigma_0 + 2 \alpha \beta \sigma_0 + e^{-\sigma_0} + \frac{\alpha}{4\beta} -  \alpha \sigma_0 + \alpha \beta\sigma_0^2 \Bigr)
\end{align*}
Note that the values of $\beta$ and $\sigma_0$ are functions of $\alpha$. We therefore can express our upper bound as $\expectation{c(H)}~\leq~c(x^\ast) f(\alpha, \appfacpcsymbol)$.
By \cref{lem:penalty-bound}, \cref{alg:pcotsp} pays at most $\alpha$ times the penalty incurred by $(x^\ast, y^\ast)$.
Running \cref{alg:pcotsp} with $\theta$ and $\sigma$ determined by the single parameter $\alpha$ thus yields an approximation factor of $\max(\alpha, f(\alpha, \appfacpcsymbol))$.
Recall that the value of $\appfacpcsymbol$ is currently slightly below $1.599$.
For $\appfacpcsymbol = 1.599$, the term  $\max(\alpha, f(\alpha, \appfacpcsymbol))$ is minimized at $\alpha \approx 2.096896 < \appfac$.
In fact, if we set $\alpha = \appfac$, the term evaluates to $\appfac$.

For $\alpha = \appfac$ and $\appfacpcsymbol = 1.599$, we have $\beta \approx 0.548775$ and $\sigma_0 \approx 0.781790$.
Thus we have proven \cref{thm:opctsp}.

\section{Prize-collecting \kpath TSP}
\label{sec:ktsp}
\PCkTSP can be modeled as the following linear program (kLP). 
\begin{mini*}
    {}{\displaystyle\sum\limits_{e \in E} \displaystyle\sum\limits_{i = 1}^k c_e x_{i, e} + \displaystyle\sum\limits_{v \in V-\mathcal T} \pi_v (1 - y_v)}{\label{lp:kpath}}{}
    \addConstraint{y_v}{= \sum\limits_{i = 1}^k y_{i, v}\quad }{\forall v \in V}
    \addConstraint{(x_i, y_i) \text{ lies in the }s_i\text{-}t_i\text{-stroll relaxation}}{}{}
\end{mini*}

Similar to \cite{boehm-friggstad-moemke-2025}, we describe two algorithms for the problem 
and show that an appropriate combination of the two algorithms gives an \appfackpath-approximation. 

In one algorithm, (which we call Algorithm $A$), we first sample $k$ trees using \cref{lem:bang-jensen} and an optimal solution $(x^\ast, y^\ast)$ of (kLP), 
where tree $T_i$ connects terminals $s_i, t_i$.
The remaining vertices are picked up in a probabilistic fashion akin to Algorithm \ref{alg:pcotsp},
i.e., we choose a random threshold $\sigma \in [\sigma_0', 1]$ 
and buy a $\bigl(V_\sigma \cap \bigcup_{i=1}^k V(T_i)\bigr)$-rooted spanning forest $\pickupforest$ of $V_\sigma$.
Here, $\sigzeroprime$ is a constant whose value will be determined later.  
Then we double every edge that does not, for any $i$,  lie on the $s_i$-$t_i$ path in $T_i$. 
This gives a tour $H_A$. 
Define $\Delta := \sum c(s_i,t_i)$ and $\eta \coloneqq \frac \Delta {c(x^\ast)}$.
Then it is easy to see that 
\begin{align*}
    c(H_A) 
    &\le 2 c(x^\ast) +2 \expectation{c(\pickupforest)} - \Delta
\end{align*}
One can already see that intuitively, this yields good results whenever $\Delta$ is large.

Now we define the distribution from which $\sigma$ is drawn.
We choose $\sigma$ s.t. 
$\Pr[\sigma \leq y] = F'_\sigma(y)$ where
\[F'_\sigma(y) = 1 - \frac{e^{y} (1 - y)}{e^{ \sigzeroprime} (1 - \sigzeroprime)}.\]
Note that except for the constant $\sigzeroprime$, this is exactly the same distribution that we used in \cref{alg:pcotsp}.
In fact, if we define $\rho \coloneqq \frac{e^{-\sigzeroprime}}{1-\sigzeroprime}$, we obtain $F'_\sigma(y) = 1 - \frac{\rho (1-y)}{e^{-y}}$  (compare this to $F_\gamma(y)= 1 - \frac{\alpha (1-y)}{e^{-y}}$) so any result about $F_\sigma$ obtained in the previous setting carries over to $F'_\sigma$ if we replace $\alpha$ by $\rho$.
We remark that we use $\rho$ instead of $\alpha'$ because unlike in the previous case, $\rho$ will not be our final approximation factor.

We can thus bound the expected   cost of $\pickupforest$ in the same way as we did for \PCOTSP. 
This is possible because the distribution $F_\sigma$ as well as the (lower bound on the) probability of a vertex $v \in V_\sigma$ sampled at least once are the same as in \cref{subsec:tour-cost}.

First, we prove an equivalent of \cref{lem:pickup_cost} i.e. we show that $c(\pickupforest) = \frac{e^{-\sigma}}{\sigma}$ for any fixed value of $\sigma$,
and then we compute the expected cost $\expectation{c(\pickupforest)} = e^{-\sigzeroprime}$ by integrating over the range $[\sigzeroprime, 1]$ from which $\sigma$ is chosen as we have done in \cref{eqn:expected-pickup-cost}.
This gives us the following upper bound on the expected tour cost:
\begin{align*}
    c(H_A) 
    &\le \biggl(2+2 e^{-\sigzeroprime} - \eta\biggr) c(x^\ast).
\end{align*}

Furthermore, by similar reasoning as in \cref{sec:penalty-cost}, 
we know that the expected total penalty paid by this algorithm is at most $\rho  = \frac{e^{-\sigzeroprime}}{1 - \sigzeroprime}$ times the fractional penalty cost incurred by $(x^\ast, y^\ast)$.

Now we give a simple Algorithm $B$ that works well when $\eta$ is small. 
Contract all the $2k$ terminals into one mega-vertex $w$ with penalty $\infty$, 
solve the \PCTSP LP for this instance, and sample a tree $T$ from the solution. 
Double every edge in $T$ (obtained in the original graph), and then add the $k$ edges $\{s_i,t_i\}$ 
to get a final solution $H_B$. 
It is easy to see that $c(H_B)\le (2+\eta)c(x^\ast)$ and that the incurred penalty is no higher than the fractional penalty cost of $(x^\ast, y^\ast)$.

Running both algorithms $A$ and $B$ and returning each solution with probability $\frac{1}{2}$, yields a tour of expected cost
$\frac{c(H_A) + c(H_B)}{2} \leq (2 + e^{-\sigma_0'}) \cdot c(x^\mystar)$
and an expected total penalty cost of at most
\[ \frac{1}{2}\biggl( \frac{e^{-\sigma_0'}}{1 - \sigma_0'} + 1\biggr) \cdot \sum_{v \in V - \mathcal{T}}(1 - y^\ast_v).
\]
The approximation ratio that we get from combining both algorithms is 
\[
    \max\biggl\{
        2 + e^{-\sigma_0'},
        \frac{1}{2}\biggl( \frac{e^{-\sigma_0'}}{1 - \sigma_0'} + 1\biggr)
    \biggr\},
\]
which is minimized at $\sigma_0 \approx 0.892769$. This yields an approximation guarantee slightly below $2.41$, 
proving \cref{thm:kpathtsp}.

\section{Discussion}
The \PCOTSP, as a generalization of both \PCTSP and \OTSP, 
poses the challenges of each of the individual problems plus new challenges. 
The best approximation algorithms for \OTSP (\cite{boehm-friggstad-moemke-2025,armbruster-mnich-nagele-2024}) 
both pick up vertices which have been left out of sampling, which imposes a cost of $\frac{1}{e}$ over the solution. 
In the latest \PCTSP result (\cite{blauth-naegele-2023}), the cost of parity correction is slightly below $0.6$. 
Hence, even without considering the specific characteristics of \PCOTSP, 
utilizing the techniques of the latest results for the two problems brings the approximation factor close to 2. 
But note that in \PCOTSP, the picking up of vertices is more complicated because the vertices are not necessarily wholly covered fractionally, 
and the need for parity correction for picked up vertices adds further complications which are absent from \PCTSP. 

It is not difficult to see that for the special cases of \PCTSP, \OTSP, our algorithms produce 
the best previously-known approximation factors for these problems. 
For example, when $k=1$, \PCOTSP is simply \PCTSP. 
In this case, the cycle length over the one vertex is zero, 
and the output of our algorithm would be no worse than 
the output \PCTSP algorithm of \cite{blauth-klein-naegele-2024} on the remaining vertices.
Likewise, setting all penalties to $\infty$ turns \PCOTSP into \OTSP (and thus all $y$ values are 1). 
Thus all vertices that have not been sampled will be picked up (i.e., $\sigma=1$), 
and no vertex is split off. This is equivalent to the algorithms of 
\cite{boehm-friggstad-moemke-2025} and \cite{armbruster-mnich-nagele-2024}. 
In a similar vein, setting all penalties to $\infty$ turns \PCkTSP into \kTSP, and here 
our algorithm would do the same as the factor $2.367$ algorithm of \cite{boehm-friggstad-moemke-2025}. Their factor $2.2131$ algorithm, however, does not directly carry over to our setting. The issue is that in the prize collecting setting, we require a picking-up step which leads to additional costs exceeding the additional gains as soon as we have to sample more than one tree.

It is an intriguing open question whether the problem admits an approximation factor of 2 or below, 
which we believe requires improvements of at least one of \PCTSP or \OTSP. 
Likewise, finding a good lower bound on the integrality gap of current LP relaxation would be very interesting. 

\bibliography{references}
\end{document}